\documentclass[journal,transmag]{IEEEtran}

%

\usepackage{mathrsfs}
\usepackage[dvips]{graphicx}
\usepackage{amsfonts}
\usepackage{amsthm}
\usepackage[cmex10]{amsmath}
\usepackage{hyperref}
\usepackage{algorithm}
\usepackage{algorithmic}
\usepackage{abstract}
\usepackage{bm} 
\usepackage{rotating}
\usepackage{array}
\usepackage{makecell}
\usepackage{float}
\usepackage{multirow}
\usepackage{caption}
\usepackage[inline]{enumitem}
\usepackage{cite}


\newtheorem{theorem}{Theorem}
\newtheorem{lemma}{Lemma}
\newtheorem{remark}{Remark}
\newtheorem{corollary}{Corollary}




\begin{document}

\title{Generalized Ensemble Model for Document Ranking in Information Retrieval}

\author{\IEEEauthorblockN{Yanshan Wang\IEEEauthorrefmark{1},
In-Chan Choi\IEEEauthorrefmark{2},
Hongfang Liu\IEEEauthorrefmark{1}}
\IEEEauthorblockA{\IEEEauthorrefmark{1}Department of Health Sciences Research,
        Mayo Clinic, Rochester, MN 55905, USA}
\IEEEauthorblockA{\IEEEauthorrefmark{2}School of Industrial Management Engineering,
       Korea University, Seoul 136-701, South Korea}
\thanks{Manuscript received ; revised .
Corresponding author: Y. Wang (email: Wang.Yanshan@mayo.edu).}}

\markboth{Journal of ,~Vol.~XX, No.~X, XX}%
{Shell \MakeLowercase{\textit{et al.}}: Bare Demo of IEEEtran.cls for Journals}

\IEEEtitleabstractindextext{%
\begin{abstract}
A generalized ensemble model (gEnM) for document ranking is proposed in this paper. The gEnM linearly combines basis document retrieval models and tries to retrieve relevant documents at high positions. In order to obtain the optimal linear combination of multiple document retrieval models or rankers, an optimization program is formulated by directly maximizing the mean average precision. Both supervised and unsupervised learning algorithms are presented to solve this program. For the supervised scheme, two approaches are considered based on the data setting, namely batch and online setting. In the batch setting, we propose a revised Newton's algorithm, gEnM.BAT, by approximating the derivative and Hessian matrix. In the online setting, we advocate a stochastic gradient descent (SGD) based algorithm---gEnM.ON. As for the unsupervised scheme, an unsupervised ensemble model (UnsEnM) by iteratively co-learning from each constituent ranker is presented. Experimental study on benchmark data sets verifies the effectiveness of the proposed algorithms. Therefore, with appropriate algorithms, the gEnM is a viable option in diverse practical information retrieval applications.
\end{abstract}

\begin{IEEEkeywords}
ensemble model, mean average precision, document ranking, Information Retrieval, nonlinear optimization
\end{IEEEkeywords}}

\maketitle

\IEEEdisplaynontitleabstractindextext

%
\IEEEpeerreviewmaketitle

\section{Introduction}

Ranking is a core task for Information Retrieval (IR) in practical applications such as search engines and advertising recommendation systems. The aim of ranking task is to retrieve the most relevant objects (documents, for example) with regard to a given query. With the continuous growth of information in modern world wide webs, this task has become more challenging than ever before. In the ranking task, the general problem is the over-inclusion of relevant documents that a user is willing to receive \cite{salton1986introduction}. During the last decade, a large quantity of models has been proposed to solve this problem. In general, those models are evaluated by two IR performance measures, namely Mean Average Precision (MAP) and Normalized Discounted Cumulative Gain (NDCG) \cite{jarvelin2000ir}. Compared to the framework in which models are proposed and then tested by IR measures, the approaches of directly optimizing IR measures have been showing more effective \cite{qin2010general, xu2008directly}. These approaches apply efficient algorithms to solve the optimization problem where the objective function is one of the IR measures.

Structured SVM is a widely used framework for optimizing the bound of IR measures. Examples include $\text{SVM}^{\textit{map}}$ \cite{yue2007support} and $\text{SVM}^{\textit{ndcg}}$ \cite{chapelle2007large}. Many other methods, such as Softrank \cite{taylor2008softrank,guiver2008learning}, first approximate the ranking measures through smooth functions and then optimize the surrogate objective functions. Yet, the drawbacks of those methods has been shown in two aspects:
\begin{enumerate*}[label=\itshape\alph*\upshape)]
\item the relationship between the surrogate objective functions and ranking measures was not sufficiently studied; and
\item the algorithms resolving the optimization problems are not trivial to be employed in practice \cite{qin2010general}.
\end{enumerate*}
Recently, a general framework that directly optimizes of IR measure has been reported \cite{qin2010general}. This framework can effectively overcome those drawbacks. However, it only optimizes the IR measure of one ranker, and the information provided by other rankers is not fully utilized.

In classification area, an ensemble classifier that linearly combines multiple classifiers has been successfully proved to perform better than any of the constituent classifiers. A number of sophisticated algorithms have been proposed for obtaining the ensemble classifier such as AdaBoost \cite{freund1995desicion}. Thus, the hypothesis that the performance can be improved by combining multiple rankers may be true as well. As a matter of fact, AdaRank \cite{xu2007adarank,wu2010adapting} and LambdaMART are two well-known models in IR area utilizing AdaBoost. The AdaRank repeatedly constructs weak rankers (features) and finally linearly combines into a strong ranker with proper weights assigned to the constituent rankers. However, the drawback of the AdaRank is the inexplicit theoretical justification and determination of the iteration number. While the LambdaMART enjoys the theoretical advantage of directly optimizing IR measures by linearly combining any two rankers, it cannot be extended to multiple rankers straightforwardly. In those previous studies, the direct optimization of NDCG is well-studied but the direct optimization of MAP are rarely tackled, to the best of our knowledge. The main difficulty of directly optimizing MAP is that the objective function defined by MAP is nonsmooth, nondifferentiable and nonconvex. Ensemble Model (EnM) \cite{wang2015indexing} solves this problem by using boosting algorithm and coordinate descent algorithm. However, the solutions cannot be theoretically guaranteed to be optimal, or even local optimal.

In this paper, we propose a generalized ensemble model (gEnM) for document ranking. It is an ensemble ranker that linearly combines multiple rankers. By appropriate adjustments to the weights for those constituent rankers, one may improve the overall performance of document ranking. To compute the weights, we formulate a constrained nonlinear program which directly optimizes the MAP. The difficulty of solving this nonlinear program lies in the nondifferentiable and noncontinuous objective function. To overcome this difficulty, we first introduce a differentiable surrogate to approximate the objective function, and then formulate an approximated unconstrained nonlinear program.

Both supervised and unsupervised algorithms are employed for solving the nonlinear program. In the supervised scheme, batch and online data settings are considered. These schemes and settings are designed for different IR environments. For the batch setting, the algorithm gEnM.BAT is a revised Newton's method by approximating the derivative and Hessian matrix. As for the online scheme, an online algorithm, gEnM.ON, is proposed based on stochastic gradient descent algorithms. The gEnM.ON is the first online algorithm for obtaining an ensemble ranker, to the best of our knowledge. In the unsupervised scheme, an unsupervised gEnM (UnsEnM) inspired by iRANK\cite{wei2010irank} is proposed. The UnsEnM utilizes the collaborative information among constituent rankers. The advantage of UnsEnM over the iRANK is that it is applicable to any number of constituent rankers. Compared to the EnM, the generalized version gEnM differs in three aspects:
\begin{enumerate}
  \item The assumption for EnM is relaxed for gEnM;
  \item the batch algorithms proposed for gEnM performs better;
  \item both online algorithm and unsupervised algorithm are proposed for gEnM whereas only batch algorithm for EnM.
\end{enumerate}

The remainder of this paper is organized as follows. In the next section, the problem of direct optimization of MAP is described and formulated. Also, the approximation to this problem is provided as long as the theoretical proofs. The algorithms, including gEnM.BAT, gEnM.ON and UnsEnM, are presented in Section 5. The computational results of the proposed algorithms tested on the public data sets are demonstrated in Section 6. The last section concludes this paper with discussions.

\section{Generalized Ensemble Model}
\subsection{Problem Description}

Consider the task of constructing a linear combination of rankers that result in better performance than each constituent. We call this linear combination the \textit{ensemble ranker} or \textit{ensemble model} hereinafter. Given a search query in this task, a sequence of documents is retrieved by the constituent rankers according to the relevance to the query. The relevance is measured by the ranking scores calculated by each ranker. For explicit description, let $\mathbf{score}_k$ denote the \textit{ranking score} or \textit{relevant score} calculated by the $k^{th}$ ranker. With appropriate weights $weight_k$ over those constituent rankers, the ranking scores $\mathbf{score}$ of ensemble ranker is defined by linearly summing the weighted constituent ranking scores, i.e., 
\begin{equation*}
\begin{aligned}
\mathbf{score}=&weight_1\cdot\mathbf{score}_1+weight_2\cdot\mathbf{score}_2+ \\
&\cdots+weight_k\cdot\mathbf{score}_k
\end{aligned}
\end{equation*}
where the weights satisfy $weight_i\geq0$ and $weight_1+weight_2+\cdots+weight_k=1$. The documents ranked by the ensemble ranker are thus ordered according to the ensemble ranker scores. Our goal is to uncover an optimal weight vector $$\mathbf{weight}=(weight_1,weight_2,...,weight_k)^T$$ with which more relevant documents can be ranked at high positions.

A toy example shown in Table \ref{tab:table.1} describes this problem. According to the ranking scores, the ranking lists returned by Ranker 1 and 2 are \{2,1,3\} and \{3,1,2\}, respectively, and the corresponding MAPs are 0.72 and 0.72. In order to make full use of the ranking information provided by both rankers, a conventional heuristic is to sum up ranking scores (i.e., use uniform weights, $(0.5,0.5)$), which generates Ensemble 1 with MAP equal to 0.72. Obviously, this procedure is not optimal since we can give arbitrary alternative weights that generate a better precision. For example, Ensemble 2 uses weights $(0.7,0.3)$ so as to result in higher MAP, i.e., 0.89, as listed in the table.

\begin{table}[h!]
\centering
\caption{A toy example. The values in the mid-three rows represent the ranking scores given an identical query. The rankers are measured by MAP, as listed in the fifth row. The ranking scores of Ensemble 1 and 2 are defined by 0.5*Ranker 1+0.5*Ranker 2 and 0.7*Ranker 1+0.3*Ranker 2, respectively. The relevant document list is assumed to be \{2,3\}. }
\label{tab:table.1}
\begin{tabular}{ccccc}
  \hline
   & Ranker 1 & Ranker 2 & Ensemble 1 & Ensemble 2\\
  \hline
  Document 1 & 0.35 & 0.2 & 0.55 & 0.305\\
  Document 2 & 0.4 & 0.1 & 0.5 & 0.31\\
  Document 3 & 0.25 & 0.7 & 0.95 & 0.385\\
  \hline
  MAP & 0.72 & 0.72 & 0.72 & 0.89\\
  \hline
\end{tabular}
\end{table}

This toy example implies that there exist optimal weights assigned for the constituent rankers to construct an ensemble ranker. Different from proposing new probabilistic or nonprobabilistic models, this ensemble model motivates an alternative way for solving ranking tasks. In order to formulate this task as an optimization problem, the metric---MAP---is used as the objective function since it reflects the performance of IR system and tends to discriminate stably among systems compared to other IR metrics \cite{robertson2012smoothing}. Therefore, our goal is changed to calculate the weights with which the MAP is maximized. In the following, we will describe and solve this problem mathematically.

\subsection{Problem Definition}

Let $D$ be a set of documents, $Q$ a set of queries and $\Phi$ a set of rankers. $|D_i|$ denotes the relevant document list, $d_j\in D$ the $d_j^{th}$ document associated with $j^{th}$ relevant document in $D_i$, $q_i\in Q$ the $i^{th}$ query and $\phi_k\in \Phi$ the $k^{th}$ ranker. $L$ represents the number of queries, $|D_i|$ the number of relevant documents associated with $q_i$ and $K_\phi$ the number of rankers. The ensemble ranker is defined as $H=\sum^{K_\phi}_{k=1}\alpha_k\phi_k$ which linearly combines $K_{\phi}$ constituent rankers with weights $\alpha$'s. We assume the relevant documents have been sorted in descending order according to the ranking sores. On the basis of these notations and the definition of MAP, the aforementioned problem can be formulated as:
\begin{equation*}\label{prob.orig}
  \begin{aligned}
    & \max
    & &\frac{1}{L}\sum^L_{i}\frac{1}{|D_i|}\sum^{|D_i|}_{j}\frac{j}
    {R\left(d_j,H\right)}\\
    & \text{s.t.}
    & &\sum^{K_\phi}_{k=1}\alpha_k=1\\
    &
    & &0\leq\alpha_k\leq1, k=1,2,...,K_{\phi}
  \end{aligned}\eqno(P1)
\end{equation*}
where $R\left(d_j,H\right)$ represents the ranking position of document $d_j$ given by the ensemble model $H$. In this constrained nonlinear program, \begin{enumerate*}[label=\itshape\alph*\upshape)]
\item the objective function is a general definition of MAP; and
\item the constraints indicate that the linear combination is convex and that the weights can be interpreted as a distribution.
\end{enumerate*}
Since the position function $R(d_j,H)$ is defined by the ranking scores, it can be written as
\begin{equation}\label{equ.r.orig}
  R(d_j,H)=1+\sum_{d\in D, d\neq d_j}\mathbf{I}\left\{s_{d_j,d}(H)<0\right\}
\end{equation}
where $s_{x,y}(H)=s_x(H)-s_y(H)$ and $\mathbf{I}\{s_{x,y}(H)<0\}$ is an indicator function which equals 1 if $s_{x,y}(H)<0$ is true and 0 otherwise. Here, $s_x(H)$ denotes the ranking score of document $x$ given by ensemble model $H$ and $s_{x,y}(H)$ the difference of the ranking scores between document $x$ and $y$. Since $s_x(H)$ is linear with respect to the weights, it can be rewritten as
\begin{equation}
\begin{aligned}
  s_x(H)&=s_x\left(\sum^{K_\phi}_{k=1}\alpha_k\phi_k(q_i)\right) \\
  &=\sum^{K_\phi}_{k=1}\alpha_ks_x(\phi_k(q_i))
\end{aligned}
\end{equation}
where $s_x(\phi_k(q_i))$ denotes the relevant score of document $x$ for query $q_i$ calculated by model $\phi_k$.

Here, we give an example plot that illustrates the graph of the objective function. This example employed the MED data set with the settings identical to those in \cite{wang2015indexing} except that only two constituent rankers, LDI and pLSI, were used to comprise the ensemble ranker for plotting purpose. The weights were restricted to the constraints in Problem P1 with the precision of three digits after the decimal point. In detail, the objective function was evaluated by setting $\alpha_1$ for LDI and $\alpha_2$ for pLSI, where $\alpha_1+\alpha_2=1$, and $\alpha_1$ increased from $0$ to $1$ with a step size of $0.001$. Figure \ref{fig.zoom} shows a partial of the graph of objective function. From this plot, it is clearly observed that \begin{enumerate*}[label=\itshape\alph*\upshape)]
\item the objective function is highly nonsmooth and nonconvex; and
\item there are numerous local optimums in the objective function.
\end{enumerate*}
Though the differentiability is not obvious in this graph, the position function implies that the objective function is nondifferentiable in terms of weights. Therefore, the general gradient-based algorithms, such as Lagrangian Relaxation and Newton's Method, cannot be applied to this problem directly to find the optimum, even local optimums \cite{qin2010general}.

\begin{figure}[h!]
  \centering
  \includegraphics[width=0.5\textwidth]{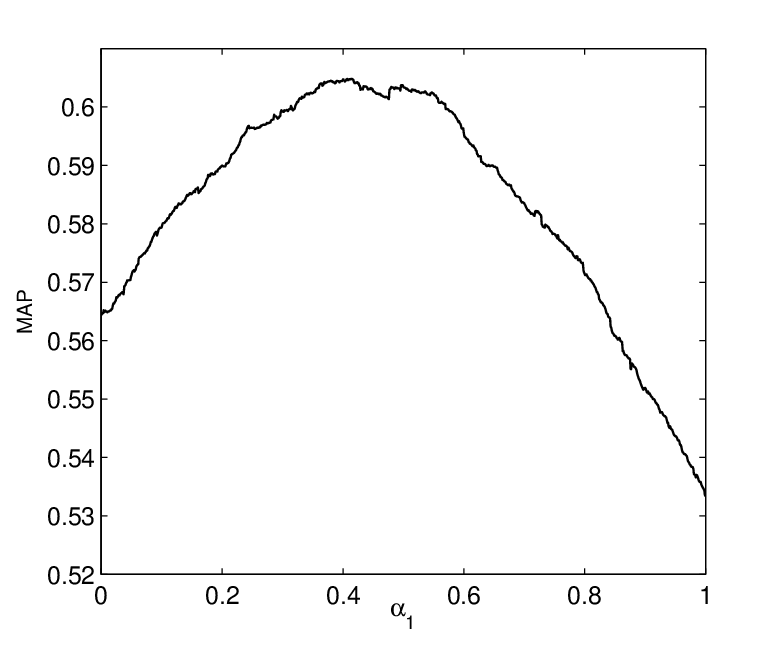}
  \caption{An illustrated example of the objective function with two constituent rankers in Problem P1. }\label{fig.zoom}
\end{figure}

From this analysis of the objective function, the position function plays an important role in the differentiability. Thus, we will discuss how to approximate it with a differentiable function and how to solve this optimization Problem P1 in the next two sections.

\section{Approximation}
In this section, we propose a differentiable surrogate for the position function and further approximate the Problem P1 with an easier nonlinear program.

Since the position function is defined by an indicator function (Equation \ref{equ.r.orig}), we can use a sigmoid function to approximate this indicator function, i.e.,
\begin{equation}
  \mathbf{I}\{s_{d_j,d}(H)<0\}\simeq \frac{\exp(-\beta s_{d_j,d}(H))}{1+\exp(-\beta s_{d_j,d}(H))},
\end{equation}
where $\beta >0$ is a scaling constant. It is obvious that this approximation is in the range of $[0.5,1)$ if $s_{d_j,d}(H)\leq0$ and $(0,0.5]$ if $s_{d_j,d}(H)>0$. The following theorem shows that we can get a tight bound by this approximation.

\begin{theorem}\label{lemma.2}
The difference between the sigmoid function $g_{ij}$ and the indicator function $\mathbf{I}\{s_{d_j,d}(H)<0\}$ is bounded as:
$$\left|g_{ij}-\mathbf{I}\{s_{d_j,d}<0\}\right|
<\frac{1}{1+\exp(\beta\delta_{ij})}$$
where $\delta_{ij}=\min|s_{d_j,d}|$, $g_{ij}=\frac{\exp(-\beta \sum^{K_\phi}_{k=1}\alpha_k s_{d_j,d})}{1+\exp(-\beta \sum^{K_\phi}_{k=1}\alpha_k s_{d_j,d})}$ and $s_{d_j,d}$ represents $s_{d_j,d}(\phi_k(q_i))$ for notational simplicity henceforth
\end{theorem}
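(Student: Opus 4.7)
The plan is to reduce the theorem to a direct computation of the absolute difference between the sigmoid and the indicator, and then invoke the definition of $\delta_{ij}$ to convert the data-dependent bound into the stated one.

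First I would introduce the shorthand $S = \sum_{k=1}^{K_\phi}\alpha_k s_{d_j,d}(\phi_k(q_i))$ so that $g_{ij} = \exp(-\beta S)/(1+\exp(-\beta S))$ and the indicator is $\mathbf{I}\{S<0\}$. The proof then splits on the sign of $S$. If $S\geq 0$, the indicator equals $0$ and $g_{ij} = 1/(1+e^{\beta S})$, so $|g_{ij}-\mathbf{I}| = 1/(1+e^{\beta S}) = 1/(1+e^{\beta|S|})$. If $S<0$, the indicator equals $1$ and $1-g_{ij}=e^{\beta S}/(1+e^{\beta S})=1/(1+e^{-\beta S})=1/(1+e^{\beta|S|})$. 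In either case one obtains the clean identity
\begin{equation*}
\bigl|g_{ij}-\mathbf{I}\{S<0\}\bigr| \;=\; \frac{1}{1+\exp(\beta|S|)}.
\end{equation*}

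Next I would translate the bound on $|S|$ into a bound on $\delta_{ij}$. Since the function $x\mapsto 1/(1+e^{\beta x})$ is strictly decreasing in $x$ for $\beta>0$, it suffices to show $|S|\geq \delta_{ij}$. Reading the notation literally, $\delta_{ij}=\min |s_{d_j,d}|$ is the minimum (over the summation index $k$, with all $s_{d_j,d}(\phi_k(q_i))$ of the same sign, which is the meaningful case since otherwise the rankers disagree on the order of the pair) of the per-ranker score gaps. Under that sign-consistency, the triangle inequality collapses to
\begin{equation*}
|S|=\Bigl|\sum_{k=1}^{K_\phi}\alpha_k s_{d_j,d}(\phi_k(q_i))\Bigr|=\sum_{k=1}^{K_\phi}\alpha_k|s_{d_j,d}(\phi_k(q_i))|\geq \delta_{ij}\sum_{k=1}^{K_\phi}\alpha_k=\delta_{ij},
\end{equation*}
using the simplex constraint $\sum_k\alpha_k=1$ from Problem P1. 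Combining with the exact identity above yields $|g_{ij}-\mathbf{I}|=1/(1+e^{\beta|S|})\leq 1/(1+e^{\beta\delta_{ij}})$, and the inequality is strict as soon as at least one $|s_{d_j,d}(\phi_k(q_i))|$ strictly exceeds $\delta_{ij}$ with $\alpha_k>0$, giving the strict bound in the statement.

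The main obstacle I anticipate is not the calculus but interpreting $\delta_{ij}$ cleanly: the symbol $s_{d_j,d}$ is overloaded (it denotes both the ensemble gap and the per-ranker gap $s_{d_j,d}(\phi_k(q_i))$ in the statement), so the key step is to justify that the minimum over the per-ranker gaps lower-bounds the convex combination $|S|$. This relies on the sign-consistency remark above together with convexity of $|\cdot|$; without sign-consistency, one would have to weaken the bound by replacing $\delta_{ij}$ with a signed quantity. A secondary subtlety is ensuring the inequality is strict rather than weak: this follows because the sigmoid never exactly attains $0$ or $1$, so for any finite $\beta$ and finite scores $1/(1+e^{\beta|S|})<1/(1+e^{\beta\delta_{ij}})$ whenever the inequality $|S|>\delta_{ij}$ is itself strict, and otherwise the bound is attained in the limit only.
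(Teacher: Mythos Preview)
Your proposal is correct and follows essentially the same route as the paper: split on the sign of the (ensemble) score gap, bound the sigmoid--indicator difference by $1/(1+\exp(\beta\delta_{ij}\sum_k\alpha_k))$ via $\delta_{ij}\le |s_{d_j,d}|$, and then invoke the simplex constraint $\sum_k\alpha_k=1$. Your derivation is in fact more explicit than the paper's on two points the paper leaves implicit---the need for sign-consistency of the per-ranker gaps in order that $|S|\ge\delta_{ij}$, and the distinction between the weak and strict inequality---so nothing is missing.
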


\begin{proof}
For $s_{d_j,d}>0$, we have $\mathbf{I}\{s_{d_j,d}<0\}=0$ and $\delta_{ij}\leq s_{d_j,d}$, thus,
\begin{equation*}
  \left|g_{ij}-\mathbf{I}\{s_{d_j,d}<0\}\right|\leq\frac{1}{1+\exp(\beta\delta_{ij}\sum^{K_\phi}_{k=1}\alpha_k)}
\end{equation*}
For $s_{d_j,d}<0$, we have $\mathbf{I}\{s_{d_j,d}<0\}=1$ and $\delta_{ij}\leq-s_{d_j,d}$, thus,
\begin{equation*}
\begin{aligned}
  &\left|g_{ij}-\mathbf{I}\{s_{d_j,d}<0\}\right| \\
  &\leq\frac{1}{1+\exp(\beta\delta_{ij}\sum^{K_\phi}_{k=1}\alpha_k)}
\end{aligned}
\end{equation*}
Since $\sum^{K_\phi}_{k=1}\alpha_k=1$, we can get
\begin{equation}
\left|g_{ij}-\mathbf{I}\{s_{d_j,d}<0\}\right|
\leq\frac{1}{1+\exp(\beta\delta_{ij})}.
\end{equation}

This completes the proof.
\end{proof}

This theorem tells us that the sigmoid function is asymptotic to the indicator function especially when $\beta$ is chosen to be large enough. By using this approximation, the position function can be correspondingly approximated as
\begin{equation}\label{equ.psap}
  \hat{R}(d_j,H)=1+\sum_{d\in D, d\neq d_j}\frac{\exp(-\beta s_{d_j,d}(H))}{1+\exp(-\beta s_{d_j,d}(H))},
\end{equation}
which becomes differentiable and continuous.

Then it is trivial to show the approximation error of position function, i.e.,
\begin{equation}\label{equ.R.err}
\begin{aligned}
  \left|\hat{R}(d_j,H)-R(d_j,H)\right|
  &\leq \sum_{d\in D, d\neq d_j}\left|g_{ij}-\mathbf{I}\{s_{d_j,d}<0\}\right| \\
  &<\frac{|D|-1}{1+\exp(\beta\delta_{ij})}.
\end{aligned}
\end{equation}

Suppose 1000 documents exit in the document set $D$ and $\delta_{ij}=0.04$. By setting $\beta=300$, the approximation error of the position function is bounded by
\begin{equation}
\left|\hat{R}(d_j,H)-R(d_j,H)\right|<0.006,
\end{equation}
which is tight enough for our problem.

In this way, the original Problem P1 can be approximated by the following problem
\begin{equation*}\label{prob.sec}
 \begin{aligned}
    & \max
    & &\frac{1}{L}\sum^L_{i=1}\frac{1}{|D_i|}\sum^{|D_i|}_{j=1}\frac{j}{\hat{R}(d_j,H)}\\
    & \text{s.t.}
    & &\sum^{K_\phi}_{k=1}\alpha_k=1\\
    &
    & &0\leq\alpha_i\leq1, i=1,2,...,K_{\phi}.
  \end{aligned}\eqno(P2)
  \end{equation*}

Using the settings identical to Figure \ref{fig.zoom}, Figure \ref{fig.apc} plots the graphs of the original objective function (OOF) in Problem P1 and the approximated objective function (AOF) in Problem P2. As shown in the plot, the trend of the AOF is close to that of the OOF. The weights generating the optimal MAP almost remain unchanged in these two curves. From this example, it is illustratively shown that the original noncontinuous and nondifferentiable objective function can be effectively approximated by a continuous and differentiable function. The following lemma and theorem will theoretically prove this conclusion.

\begin{figure}[h!]
  \centering
  \includegraphics[width=0.5\textwidth]{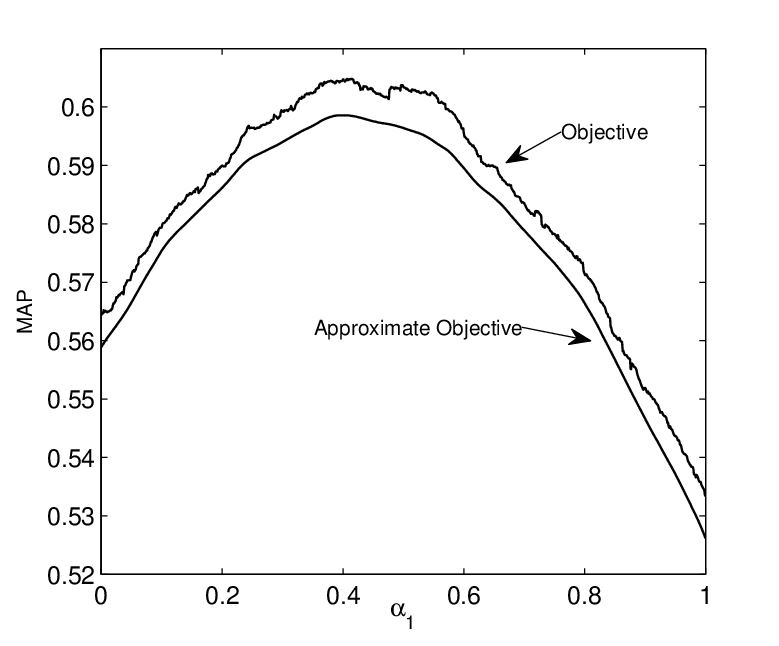}
  \caption{Comparison of the OOF in Problem P1 and AOF in Problem P2. ($\beta=200$)}\label{fig.apc}
\end{figure}

\begin{theorem}
\label{thm.1}
  The error between the OOF in Problem P1 and the AOF in Problem P2 is bounded as
  \begin{equation}\label{equ.g.sum}
    |\hat{\Lambda}-\Lambda|<\frac{(|D|-1)(L+\sum_i |D_i|)}{2L(1+\exp(\beta \delta_{ij}))}
  \end{equation}
  where $\hat{\Lambda}$ and $\Lambda$ denote the objective function in Problem P2 and Problem P1, respectively.
\end{theorem}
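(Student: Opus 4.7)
The plan is to derive the bound by opening up the definitions of $\hat{\Lambda}$ and $\Lambda$, moving the absolute value inside the double sum via the triangle inequality, and then invoking the position-function approximation bound (\ref{equ.R.err}) already established just before the theorem statement.

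First I would write
\[
|\hat{\Lambda}-\Lambda| \leq \frac{1}{L}\sum_{i=1}^{L}\frac{1}{|D_i|}\sum_{j=1}^{|D_i|} j\left|\frac{1}{\hat{R}(d_j,H)}-\frac{1}{R(d_j,H)}\right|,
\]
then use the algebraic identity $|1/a-1/b|=|a-b|/(ab)$. The key observation is that $R(d_j,H)\geq 1$ (since it is a rank in a nonempty list) and $\hat{R}(d_j,H)\geq 1$ (since it is $1$ plus a sum of strictly positive sigmoid values), so $R(d_j,H)\hat{R}(d_j,H)\geq 1$ and hence
\[
\left|\frac{1}{\hat{R}(d_j,H)}-\frac{1}{R(d_j,H)}\right|\leq \bigl|\hat{R}(d_j,H)-R(d_j,H)\bigr|.
\]
At this point the position-function error bound from (\ref{equ.R.err}) immediately gives a pointwise estimate of $(|D|-1)/(1+\exp(\beta\delta_{ij}))$.

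The remaining step is purely arithmetic. Treating $\delta_{ij}$ as a worst-case lower bound on score gaps (the same abuse of notation used in Theorem~\ref{lemma.2}), I can pull the bound outside and evaluate $\sum_{j=1}^{|D_i|} j = |D_i|(|D_i|+1)/2$, which gives $\frac{1}{|D_i|}\sum_{j=1}^{|D_i|} j = (|D_i|+1)/2$. Summing over $i$ produces $\sum_{i=1}^{L}(|D_i|+1) = L+\sum_i|D_i|$, and dividing by $L$ and multiplying by the approximation factor yields exactly $(|D|-1)(L+\sum_i|D_i|)/\bigl(2L(1+\exp(\beta\delta_{ij}))\bigr)$, as claimed.

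The only delicate point, and what I expect to be the main obstacle, is the slackness in using the crude estimate $R\hat{R}\geq 1$: one could exploit the tighter $R(d_j,H)\geq j$ (since the $j$-th relevant document has at least $j-1$ other relevant documents ranked ahead of it) to absorb the factor $j$ and obtain a sharper bound, but this would change the coefficient. For the stated form of the theorem, the combination of the coarser $R\hat{R}\geq 1$ estimate with the triangular sum identity produces precisely the $(L+\sum_i|D_i|)/(2L)$ factor, so this is the route I would take.
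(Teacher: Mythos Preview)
Your proposal is correct and follows essentially the same route as the paper's own proof: move the absolute value inside the double sum, rewrite $|1/\hat R-1/R|=|R-\hat R|/(R\hat R)$, invoke the position-function bound~(\ref{equ.R.err}), and sum. In fact your write-up is more explicit than the paper's: the paper never states the crucial estimate $R\hat R\geq 1$ (it only notes $R,\hat R$ are ``strictly positive'') and skips the arithmetic $\frac{1}{|D_i|}\sum_{j=1}^{|D_i|}j=(|D_i|+1)/2$ entirely, jumping directly from~(\ref{equ.R.err}) to the final bound. Your side remark about the sharper option $R(d_j,H)\geq j$ is also apt and not mentioned in the paper.
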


\begin{proof}
  For the approximation error, we have
  \begin{equation*}
  |\hat{\Lambda}-\Lambda|
  =\frac{1}{L}\sum^L_{i=1}\frac{1}{|D_i|}\sum^{|D_i|}_{j=1}\left|\frac{j(R-\hat{R})}{R\hat{R}}\right|,
  \end{equation*}
  where $R$ denotes $R(d_j,H)$ for notational simplicity.
  Since $\hat{R}=1+\sum_{d\neq d_j}g_{ij}(\alpha)$ and $R=1+\sum_{d\neq d_j}\mathbf{I}\{s_{d_j,d}<0\}$ are strictly positive, we have
  \begin{equation*}
  \begin{aligned}
  &\left|\frac{jj(R-\hat{R})}{R\hat{R}}\right|  \\
  &=\frac{j\left|R-\hat{R}\right|}{R\hat{R}} .\\
  \end{aligned}
  \end{equation*}

  According to Equation \ref{equ.R.err}, we have
  \begin{equation}
    |\hat{\Lambda}-\Lambda| <\frac{(|D|-1)(L+\sum_i |D_i|)}{2L(1+\exp(\beta \delta_{ij}))}.
  \end{equation}
  This completes the proof.
\end{proof}

This theorem indicates that the OOF in Problem P1 can be accurately approximated by the surrogate defined by the position function (\ref{equ.psap}) in Problem P2. For example, if $|D|=10000$, $L=200$, $\sum |D_i|=500$, $\beta=300$ and $\delta_{ij}=0.04$, the absolute discrepancy between the objectives in Problem P1 and P2 is bounded by
$$|\hat{\Lambda}-\Lambda|<0.1.$$
This discrepancy is within an acceptable level and will decrease with the growth of the query size $L$ and the value of $\beta$.

The constraints of weights in Problem P2 are of practical significance because these weights can be regarded as probabilities drawn from a distribution over the constituent rankers. However, adding constraints increases the difficulty of solving this optimization problem. Intuitively, the normalization of weights assigned for ranking scores is nonessential because the ranking position is determined by the relative values of ranking scores. Take the toy in Table \ref{tab:table.1} as an example, the weights $(3.5,1.5)$ result in the identical Ensemble 2 to $(0.7,0.3)$. The lemmas and theorems below prove the hypothesis that this constrained nonlinear program can be approximated by an unconstrained nonlinear program.

\begin{lemma}\label{lemma.1}
  Problem P2 is equivalent to the following problem:
  \begin{equation*}\label{prob.thi}
  \begin{aligned}
    &\mathrm{max}
    & &\frac{1}{L}\sum^L_{i=1}\frac{1}{|D_i|}\sum^{|D_i|}_{j=1}\frac{j}{\tilde{R}}\\
  \end{aligned}\eqno(P3)
  \end{equation*}
  where $\tilde{R}=1+\sum_{d\in D, d\neq d_j}\tilde{g}_{ij}$, $\tilde{g}_{ij}=\frac{\exp(-\beta \sum^{K_\phi}_{k=1}\tilde{\alpha}_k s_{d_j,d}(\phi_k(q_i)))}{1+\exp(-\beta \sum^{K_\phi}_{k=1}\tilde{\alpha}_k s_{d_j,d}(\phi_k(q_i)))}$ and $\tilde{\alpha}_k=\frac{\alpha_k'}{\sum^{K_\phi}_{k=1}\alpha_k'},\alpha_k'>0,k=1,2,...,K_{\phi}$
\end{lemma}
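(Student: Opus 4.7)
The plan is to establish the equivalence by exhibiting a value-preserving correspondence between the feasible set of P2 and the admissible region of P3. The crucial observation is that the objective of P3 depends on the free variables $\alpha'_k$ only through the normalized quantities $\tilde{\alpha}_k = \alpha'_k / \sum_\ell \alpha'_\ell$ that appear inside $\tilde{g}_{ij}$. By construction these normalized weights automatically satisfy $\tilde{\alpha}_k \geq 0$, $\tilde{\alpha}_k \leq 1$, and $\sum_k \tilde{\alpha}_k = 1$, so they are feasible for P2. Intuitively, the lemma reflects the scale-invariance of the ranking induced by a weighted sum of scores: multiplying all weights by a positive constant leaves every pairwise comparison $s_{d_j,d}(H)$ unchanged in sign and proportional in magnitude, so the normalization constraint is inactive once $\beta$ is absorbed into the sigmoid.

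From this observation, one direction of the equivalence is immediate. Given any admissible $\alpha' > 0$ for P3, the induced $\tilde{\alpha}$ lies in the feasible set of P2, and the P2 objective evaluated at $\tilde{\alpha}$ coincides with the P3 objective evaluated at $\alpha'$ by the very definitions of $\hat{R}$ and $\tilde{R}$. For the converse direction, given any feasible $\alpha$ of P2 with strictly positive components, I would simply take $\alpha'_k = \alpha_k$; since $\sum_k \alpha_k = 1$, this yields $\tilde{\alpha}_k = \alpha_k$, and the two objective values match. Composing the two maps shows that the two problems share the same supremum, and an optimizer of one transforms into an optimizer of the other up to positive scaling.

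The main obstacle I anticipate is handling points on the boundary of P2's feasible region, where some $\alpha_k = 0$ while the statement of P3 insists on $\alpha'_k > 0$. I would address this by a limiting argument: any such boundary point is the limit of a sequence of interior points, and by continuity of the approximated objective (which is smooth in $\alpha$ by the sigmoid approximation of Theorem \ref{thm.1}), the supremum over the open region matches the supremum over the closure. Thus the suprema of P2 and P3 coincide, and every maximizer of P3 yields a maximizer of P2 via the normalization map $\alpha' \mapsto \tilde{\alpha}$. The proof itself should be short, since once one writes out the objective of P3 and substitutes the definition of $\tilde{\alpha}_k$, the claim reduces to recognizing that the resulting expression is exactly the objective of P2 at a feasible point.
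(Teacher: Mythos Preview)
Your proposal is correct and follows essentially the same approach as the paper: the paper's entire justification is the single remark that since $\sum_k \tilde{\alpha}_k = 1$ by construction, Problem~P3 is just Problem~P2 under the reparametrization $\alpha \leftrightarrow \tilde{\alpha}$. Your write-up is in fact more careful than the paper's, which does not address the boundary case $\alpha_k = 0$ that you handle via continuity.
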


Since $\sum^{K_\phi}_{k=1}\tilde{\alpha}_k=1$, it can be straightforwardly proved that Problem P3 is equivalent to Problem P2.
\begin{remark}
If we let $g'_{ij}=\frac{\exp(-\beta \sum^{K_\phi}_{k=1}\alpha_k' s_{d_j,d}(\phi_k(q_i)))}{1+\exp(-\beta \sum^{K_\phi}_{k=1}\alpha_k' s_{d_j,d}(\phi_k(q_i)))}$, Theorem \ref{lemma.2} applies for both $\tilde{g}_{ij}$ and $g'_{ij}$ as well.
\end{remark}

The following theorem states that Problem P3 can be surrogated by an easier problem.

\begin{theorem}\label{thm.final}
  Consider the following problem
  \begin{equation*}\label{prob.noconst}
  \begin{aligned}
    &\mathrm{max}
    & &\frac{1}{L}\sum^L_{i=1}\frac{1}{|D_i|}\sum^{|D_i|}_{j=1}\frac{j}{R'}, \\
  \end{aligned}\eqno(P4)
  \end{equation*}
  where $R'=1+\sum_{d\in D, d\neq d_j}g'_{ij}$.
  Let $\tilde{\Lambda}$ and $\Lambda'$ denote the objective function in Problem P3 and Problem P4, respectively. Then, we have the following bound for the absolute difference between $\tilde{\Lambda}$ and $\Lambda '$
  \begin{equation}
    |\tilde{\Lambda}-\Lambda '|<\frac{\hat{\epsilon}(L+\sum_{i=1}^L |D_i|)}{2L}
  \end{equation}
  where $\hat{\epsilon}=\epsilon'+\tilde{\epsilon}$, $\epsilon'=\left|R'-R\right|$ and $\tilde{\epsilon}=\left|\tilde{R}-R\right|$.
\end{theorem}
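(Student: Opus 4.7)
The plan is to mirror the structure of the proof of Theorem~\ref{thm.1}: expand $|\tilde{\Lambda}-\Lambda'|$ into a double sum, exploit the common denominator trick $\tfrac{1}{\tilde{R}}-\tfrac{1}{R'}=\tfrac{R'-\tilde{R}}{\tilde{R}R'}$, bound the numerator by triangle inequality through the exact position function $R$, bound the denominator by using that all position functions are at least $1$, and finally collapse the index sum using the arithmetic series identity.

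First I would write
\begin{equation*}
|\tilde{\Lambda}-\Lambda'|
=\left|\frac{1}{L}\sum_{i=1}^{L}\frac{1}{|D_i|}\sum_{j=1}^{|D_i|}j\left(\frac{1}{\tilde{R}}-\frac{1}{R'}\right)\right|
\leq\frac{1}{L}\sum_{i=1}^{L}\frac{1}{|D_i|}\sum_{j=1}^{|D_i|}\frac{j\,|R'-\tilde{R}|}{\tilde{R}\,R'}.
\end{equation*}
Next I would observe that $\tilde{R}\geq 1$ and $R'\geq 1$ since both are defined as $1$ plus a sum of nonnegative sigmoid values, so $\tilde{R}R'\geq 1$ and the denominator can be dropped.

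Then I would apply the triangle inequality through the exact position function $R$ to get $|R'-\tilde{R}|\leq|R'-R|+|R-\tilde{R}|=\epsilon'+\tilde{\epsilon}=\hat{\epsilon}$. Plugging this in and using $\sum_{j=1}^{|D_i|}j=\tfrac{|D_i|(|D_i|+1)}{2}$, so that $\tfrac{1}{|D_i|}\sum_{j=1}^{|D_i|}j=\tfrac{|D_i|+1}{2}$, yields
\begin{equation*}
|\tilde{\Lambda}-\Lambda'|
\leq\frac{\hat{\epsilon}}{L}\sum_{i=1}^{L}\frac{|D_i|+1}{2}
=\frac{\hat{\epsilon}\left(L+\sum_{i=1}^{L}|D_i|\right)}{2L},
\end{equation*}
which is exactly the claimed bound.

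There is no real obstacle here: the argument is essentially the same as in Theorem~\ref{thm.1}, with the only subtlety being the recognition that $\tilde{R}$ and $R'$ are both approximations of the same (unapproximated) $R$, so the comparison between them naturally factors through $R$ via the triangle inequality. The only point that deserves a careful sentence is the justification that $\tilde{R},R'\geq 1$ so that the denominator term can be safely bounded away from zero; everything else is routine algebra following the template already established earlier in the paper.
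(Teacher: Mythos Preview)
Your proposal is correct and follows essentially the same route as the paper's proof: expand the difference, use the common-denominator identity, bound the numerator by inserting the exact position function $R$ and applying the triangle inequality, drop the denominator via $\tilde{R},R'\geq 1$, and finish with the arithmetic series. The only cosmetic difference is that the paper carries out the triangle-inequality step at the level of the individual sigmoid summands $g'_{ij}$ and $\tilde{g}_{ij}$ (inserting $\mathbf{I}\{s_{d_j,d}<0\}$), whereas you do it at the aggregated level of $R'$, $\tilde{R}$, $R$; these are equivalent.
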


\begin{proof}
  From Lemma \ref{lemma.1} and Lemma \ref{lemma.2}, we can derive the following bound.
  \begin{equation*}
  \begin{aligned}
  &|\tilde{\Lambda}-\Lambda '| \\
  &=\frac{1}{L}\sum^L_{i=1}\frac{1}{|D_i|}\sum^{|D_i|}_{j=1}\left|\frac{j(R'-\tilde{R})}{R'\tilde{R}}\right| \\
  \end{aligned}
  \end{equation*}
  Since $R'=1+\sum_{d\neq d_j}g'_{ij}$ and $\tilde{R}=1+\sum_{d\neq d_j}\tilde{g}_{ij}$ are strictly positive, we have
  \begin{equation*}
  \begin{aligned}
  &\left|\frac{j\left(\sum_{d\neq d_j}g'_{ij}-\sum_{d\neq d_j}\tilde{g}_{ij}\right)}{(1+\sum_{d\neq d_j}g'_{ij})(1+\sum_{d\neq d_j}\tilde{g}_{ij})}\right|  \\
  &=\frac{j\sum_{d\neq d_j}\left|(g'_{ij}-\mathbf{I}\{s_{d_j,d}<0\})+(\mathbf{I}\{s_{d_j,d}<0\}-\tilde{g}_{ij})\right|}{(1+\sum_{d\neq d_j}g'_{ij})(1+\sum_{d\neq d_j}\tilde{g}_{ij})} \\
  \end{aligned}
  \end{equation*}
  According to the general triangle inequality, we can draw an upper bound for the term in numerator
  \begin{equation*}
  \begin{aligned}
  &\sum_{d\neq d_j}\left|(g'_{ij}-\mathbf{I}\{s_{d_j,d}<0\})+(\mathbf{I}\{s_{d_j,d}<0\}-\tilde{g}_{ij})\right| \\
  &\leq\sum_{d\neq d_j}\left|g'_{ij}-\mathbf{I}\{s_{d_j,d}<0\}\right|+\sum_{d\neq d_j}\left|\mathbf{I}\{s_{d_j,d}<0\}-\tilde{g}_{ij}\right|
  \\
  &< \hat{\epsilon}. \\
  \end{aligned}
  \end{equation*}

Then, it is trivial to get
\begin{equation}
\begin{aligned}
  |\tilde{\Lambda}-\Lambda '|
  &< \frac{1}{L}\sum^L_{i=1}\frac{1}{|D_i|}\sum^{|D_i|}_{j=1}j\cdot \hat{\epsilon} \\
  &<\frac{\hat{\epsilon}(L+\sum_{i=1}^L |D_i|)}{2L}.
\end{aligned}
\end{equation}

This completes the proof.
\end{proof}

Since the differences $\epsilon'$ and $\tilde{\epsilon}$ are small enough, Problem P4 can accurately approximate Problem P3. This theorem tells us that the AOF is also determined by the ranking positions, i.e., the relative values of ranking scores, thus the normalization constraints in Problem P2 can be removed. Taking Lemma \ref{lemma.1} and Theorem \ref{thm.1} into account, we can trivially draw the following corollary.

\begin{corollary}
   Problem P1 can be approximated by Problem P4.
\end{corollary}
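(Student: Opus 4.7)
The plan is to chain together the three approximation results of this section via a triangle inequality along the sequence P1 $\to$ P2 $\to$ P3 $\to$ P4. Writing $\Lambda$, $\hat{\Lambda}$, $\tilde{\Lambda}$ and $\Lambda'$ for the objective values of Problems P1, P2, P3 and P4 respectively, I would begin from the decomposition
$$|\Lambda - \Lambda'| \le |\Lambda - \hat{\Lambda}| + |\hat{\Lambda} - \tilde{\Lambda}| + |\tilde{\Lambda} - \Lambda'|$$
and then control each summand using a result already in hand.

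First, Theorem~\ref{thm.1} bounds $|\Lambda - \hat{\Lambda}|$ in closed form by $(|D|-1)(L+\sum_i |D_i|)/[2L(1+\exp(\beta\delta_{ij}))]$. Second, I would appeal to Lemma~\ref{lemma.1} (and the remark following it) to identify $\hat{\Lambda}$ with $\tilde{\Lambda}$: the renormalization $\tilde{\alpha}_k = \alpha'_k/\sum_k \alpha'_k$ leaves the sigmoid surrogate of the position function invariant, so there is a value-preserving correspondence between feasible weight vectors of P2 and P3 and the middle term vanishes. Third, Theorem~\ref{thm.final} handles $|\tilde{\Lambda}-\Lambda'|$ with the bound $\hat{\epsilon}(L+\sum_i|D_i|)/(2L)$. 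Summing the first and third bounds yields a global error for $|\Lambda - \Lambda'|$ that decays as $\beta$ grows and the minimum score-gap $\delta_{ij}$ stays bounded away from zero, which is exactly the sense of approximation the corollary asserts.

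The only genuinely subtle point, which I would flag as the main obstacle, concerns the middle step: Lemma~\ref{lemma.1} is phrased as an equivalence of optimization problems, whereas the triangle inequality needs a \emph{pointwise} identification of objective values under the same underlying weights. I would therefore state explicitly that for any feasible $\alpha$ in P2 the corresponding $\tilde{\alpha}$ in P3 yields $\hat{\Lambda}(\alpha)=\tilde{\Lambda}(\tilde{\alpha})$ and conversely; this is immediate from the construction in the lemma but deserves to be written down so the chain of inequalities is clean. Once that observation is recorded, the corollary reduces to adding two bounds already proved, with no further calculation required.
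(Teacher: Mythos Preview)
Your proposal is correct and follows essentially the same route as the paper. The paper itself gives no formal proof of the corollary; it simply remarks that, ``taking Lemma~\ref{lemma.1} and Theorem~\ref{thm.1} into account, we can trivially draw the following corollary,'' immediately after Theorem~\ref{thm.final}, so the intended argument is precisely your chain P1 $\to$ P2 $\to$ P3 $\to$ P4 via Theorem~\ref{thm.1}, Lemma~\ref{lemma.1}, and Theorem~\ref{thm.final}. Your write-up is in fact more careful than the paper's, particularly in flagging that Lemma~\ref{lemma.1} is a change-of-variables equivalence (so $\hat{\Lambda}$ and $\tilde{\Lambda}$ agree at corresponding points) rather than a pointwise identity in the same variable.
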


In the next section, we focus on proposing algorithms that solves Problem P4.

\section{Algorithm}\label{sec.alg}

In order to solve Problem P4, we propose algorithms according to the data settings---batch setting and online setting. In the batch setting, all the queries and ranking scores given by constituent rankers are processed as a batch. Based on the batch data, the weights over constituent rankers are computed by maximizing the MAP. Two algorithms, gEnM.BAT and gEnM.IP, are reported in this setting. The potential for the batch algorithms merit consideration for those systems containing complete data. Take academic search engine as an example. The titles can be seen as queries while the abstracts and contents of publications can be regarded as relevant documents. So a batch can be established to train the proposed model.

In many IR environments such as recommendation systems in E-commerce, however, the queries and ranking scores are generated in real time so as to construct data sequences at different times. Thus, we will secondly propose an online algorithm, gEnM.ON, for dealing with these data sequences. The online algorithm is more scalable to large data sets with limited storage than the batch algorithm. In the online algorithm, the queries as well as corresponding ranking scores are input in a data stream and processed in a serial fashion.

A common assumption for the aforementioned frameworks is that the relevant documents are known. However, the knowledge of relevant documents are unknown in many modern IR systems such as search engines. For this IR environment, we further propose an unsupervised ensemble model, UnsEnM, which makes use of a co-training framework.

\subsection{Batch Algorithm: gEnM.BAT}
Although many sophisticated methods can be applied for finding a local optimum, we first propose a revised Newton's method. Major modification includes the approximation of gradients and Hessian matrix.

For notational simplicity, we utilize:
\begin{equation}\label{equ.Gij}
  G_{ij}:=\sum_{d\in D,d\neq d_j}g'_{ij};
\end{equation}
\begin{equation}\label{equ.Gijk}
  G_{ij}^k:=\sum_{d\in D,d\neq d_j}\frac{\partial g'_{ij}}{\partial\alpha'_k};
\end{equation}
\begin{equation}\label{equ.Gijl}
  G_{ij}^l:=\sum_{d\in D,d\neq d_j}\frac{\partial g'_{ij}}{\partial\alpha'_l};
\end{equation}
\begin{equation}\label{equ.Gijkl}
  G_{ij}^{kl}:=\sum_{d\in D,d\neq d_j}\frac{\partial^2 g'_{ij}}{\partial\alpha'_k\partial\alpha'_l}.
\end{equation}

Under those notations, the first and second derivative of the objective function in Problem P4 can be written as
\begin{equation}
  \frac{\partial\Lambda'}{\partial\alpha'_k}=\frac{1}{L}\sum^L_{i=1}\frac{1}
  {|D_i|}\sum^{|D_i|}_{j=1}\frac{-jG_{ij}^k}{(1+G_{ij})^2},
\end{equation}
and
\begin{equation}
\begin{aligned}
\frac{\partial^2\Lambda'}{\partial\alpha'_k\partial\alpha'_l}=&\frac{1}{L}\sum^L_{i=1}\frac{1}
  {|D_i|} \\
  &\sum^{|D_i|}_{j=1}\frac{-jG_{ij}^{kl}(1+G_{ij})^2+2jG_{ij}^k G_{ij}^l(1+G_{ij})}{(1+G_{ij})^2},
\end{aligned}
\end{equation}
respectively. According to the second derivative, the Hessian matrix is defined by
\begin{equation}\label{equ.hess}
\mathcal{H}(\bm{\alpha})=\left[\begin{array}{cccc}
  \frac{\partial^2\Lambda'}{\partial\alpha'_1\partial\alpha'_1}& \frac{\partial^2\Lambda'}{\partial\alpha'_1\partial\alpha'_2}& \cdots & \frac{\partial^2\Lambda'}{\partial\alpha'_1\partial\alpha'_{K{\phi}}}\\
  \frac{\partial^2\Lambda'}{\partial\alpha'_2\partial\alpha'_1}& \frac{\partial^2\Lambda'}{\partial\alpha'_2\partial\alpha'_2}& \cdots & \frac{\partial^2\Lambda'}{\partial\alpha'_2\partial\alpha'_{K{\phi}}}\\
  \vdots & \vdots&  & \vdots\\
  \frac{\partial^2\Lambda'}{\partial\alpha'_{K{\phi}}\partial\alpha'_1}& \frac{\partial^2\Lambda'}{\partial\alpha'_{K{\phi}}\partial\alpha'_2}& \cdots & \frac{\partial^2\Lambda'}{\partial\alpha'_{K{\phi}}\partial\alpha'_{K{\phi}}}\\
\end{array}\right].
\end{equation}

As stated by Theorem \ref{thm.sig.app} in Appendix \ref{appendix.b}, the addends in the first derivative can be estimated by zeros under certain conditions. This approximation also applies for the second derivative as well as the Hessian matrix since both contain the first derivative item. The advantages of using this approximation are two-fold:
\begin{enumerate*}[label=\itshape\alph*\upshape)]
\item the computation of Hessian is simplified since many addends are set to zeros under certain conditions; and
\item the computations of $G_{ij}^{kj}$, $G_{ij}$, $G_{ij}^l$ and $G_{ij}^k$ can be carried out offline before evaluating the derivative and Hessian, which makes the learning algorithm inexpensive.
\end{enumerate*}

Since the objective function in Problem P4 is nonconvex, multiple local optimums may exist in the variable space. Therefore, different starting points are chosen to preclude the algorithm from getting stuck in one local optimum. The largest local optimum and the corresponding weights are returned as the final solutions. To accelerate the algorithm, we can distribute different starting points onto different cores for parallel computing.

The batch algorithm is summarized as follows. We note that $\bm{\alpha}_p$ and $\textbf{s}_{d_j,d}(\phi(q_i))$ represent the vectors with elements $\alpha_p$ and $s_{d_j,d}(\phi_k(q_i))$, respectively, and that $p=1,2,...,P$ indexes $P$ initial values.

\begin{algorithm}[h!]
  \caption{gEnM.BAT (Generalized Ensemble Model by Revised Newton's Algorithm in Batch Setting.)}
  \label{alg.1}
  \begin{algorithmic}[1]
  \REQUIRE Query set $Q$, document set $D$, relevant document set $|D_i|$ with respect to $q_i\in Q$, ranking scores $s_{d}(\phi_k(q_i))$ with respect to $i$the query, $k$th method $\phi_k$ and document $d\in D$, a number of initial points $\bm{\alpha}_p$ and a threshold $\epsilon=0$ for stopping the algorithm.
  \FOR{each $\bm{\alpha}_p$}
  \STATE Set iteration counter $t=1$;
  \STATE Evaluate $\Lambda'^t$;
  \REPEAT
    \STATE Set $t=t+1$;
    \STATE Compute gradient $\nabla_{\bm{\alpha}_p^{t-1}}\Lambda'$ and Hessian matrix $H(\bm{\alpha}_p^{t-1})$ (Algorithm \ref{alg.2});
    \STATE Update $\bm{\alpha}_p^{t}=\bm{\alpha}_p^{t-1}+\mathcal{H}(\bm{\alpha}_p^{t-1})^{-1}\nabla_{\bm{\alpha}_p^{t-1}}\Lambda'$;
    \STATE Evaluate $\Lambda'^t$;
  \UNTIL{$\Lambda'^{t}-\Lambda'^{t-1}<\epsilon$}
  \STATE Store $\bm{\alpha}_p^t$
  \ENDFOR
  \RETURN $\bm{\alpha}$'s.
  \end{algorithmic}
\end{algorithm}

\begin{algorithm}[h!]
  \caption{Approximated Derivative and Hessian Computation Algorithm.}
  \label{alg.2}
  \begin{algorithmic}[1]
  \REQUIRE Query set $Q$, document set $D$, relevant document set $|D_i|$ with respect to $q_i\in Q$, ranking scores $s_{d}(\phi_k(q_i))$ with respect to $i$the query, $k$th method $\phi_k$ and document $d\in D$, current $\bm{\alpha}_p^{t-1}$.
  \FOR{$q_i\in Q$}
  \FOR{$d_j\in |D_i|$}
  \STATE Set $G_{ij}$, $G_{ij}^{kl}$, $G_{ij}^k$ and $G_{ij}^l$ to zeros;
  \FOR{$d\in D$}
  \STATE $s_{d_j,d}(\phi_k(q_i))\leftarrow s_{d_j}(\phi_k(q_i))-s_{d}(\phi_k(q_i))$;
  \STATE $g'_{ij}(\bm{\alpha}_p^{t-1}) \leftarrow \frac{\exp(-\beta\bm{\alpha}_p^{t-1} \textbf{s}_{d_j,d}(\phi(q_i)))}{1+\exp(-\beta\bm{\alpha}_p^{t-1} \textbf{s}_{d_j,d}(\phi(q_i)))}$;
  \STATE $G_{ij} \leftarrow G_{ij}+g'_{ij}(\bm{\alpha}_p^{t-1})$
  \IF{$-\frac{2}{\beta}<\bm{\alpha}_p^{t-1} \textbf{s}_{d_j,d}(\phi(q_i))<\frac{2}{\beta}$}
    \STATE $G_{ij}^{kl} \leftarrow G_{ij}^{kl}+\beta^2 s_{d_j,d}(\phi_k(q_i)) s_{d_j,d}(\phi_l(q_i))g'_{ij}(\bm{\alpha}_p^{t-1})(1-g'_{ij}(\bm{\alpha}_p^{t-1}))(1-2g'_{ij}(\bm{\alpha}_p^{t-1}))$;
    \STATE $G_{ij}^k \leftarrow G_{ij}^k+\beta s_{d_j,d}(\phi_k(q_i))$;
    \STATE $G_{ij}^l \leftarrow G_{ij}^l+\beta s_{d_j,d}(\phi_l(q_i))$;
  \ELSE
    \STATE $G_{ij}^{kl} \leftarrow G_{ij}^{kl}$;
    \STATE $G_{ij}^k \leftarrow G_{ij}^k$;
    \STATE $G_{ij}^l \leftarrow G_{ij}^l$;
  \ENDIF
  \ENDFOR
  \ENDFOR
  \ENDFOR
  \STATE Compute gradient $\nabla_{\bm{\alpha}_p^{t-1}}\Lambda'$ \hfill(Equation \ref{equ.der.app}) \\
   and Hessian matrix $\mathcal{H}(\bm{\alpha}_p^{t-1})$;
   \hfill(Equation \ref{equ.hess})
  \RETURN $\nabla_{\bm{\alpha}_p^{t-1}}\Lambda'$ and $\mathcal{H}(\bm{\alpha}_p^{t-1})$.
  \end{algorithmic}
\end{algorithm}

A drawback of the conventional Newton's method lies in that it is designed for unconstrained nonlinear programs while our problem requests $\alpha$ nonnegative. Thus applying the above algorithms may result in negative weights. The strategy for avoiding this shortcoming is to set the final negative weights to zeros. As a matter of fact, the rankers with negative weights play a negative role in the ensemble model. Thus, the ignorance of those rankers are reasonable in practice.

\subsection{Online Algorithm: gEnM.ON}

In the previous two subsections, we have presented the learning algorithms for generating gEnM by batch data sets. In contrast to the batch setting, the online setting provides the gEnM a long sequence of data. The weights are calculated sequentially based on the data stream that consists of a series of time steps $t=1,2,...,T$. For example, the gEnM is constructed based on the new queries and corresponding rankings given at different times in a search engine. The final goal is also to maximize the overall MAP on the data sets.
\begin{equation}
\max \frac{1}{T}\sum_{t=1}^T \frac{1}{D_t}\sum^{D_t}_{j=1}\frac{j}{1+\sum_{d\in D, d\neq d_j}g'_{ij}}
\end{equation}

As a matter of fact, the presented batch algorithms can be applied directly in the online setting by regarding the whole observed sequences as a batch at each step. In doing so, however, the overall complexity is extremely high since the batch algorithm should be run once at each time step.

In the online setting, the subsequent queries are not available at present. An alternative optimization technique should be considered to prevent from focusing too much on the present training data. To distinguish with the notation in the batch setting, we let $\mathbf{x}$ be the query and suppose $\mathbf{x}_1,\mathbf{x}_2,...\mathbf{x}_t,...$ are the given query at time $t$ in the online setting. Here, we assume that these sequences are given with the \textit{grand truth distributio}n $p(\mathbf{x})$. Thus, the objective function of MAP can be defined as the expectation of average precision, i.e.,
\begin{equation}
\begin{aligned}
J(\alpha) &=\sum_{t=1}^{\infty} f(\mathbf{x},\alpha)p(\mathbf{x}) \\
      &=\mathrm{E}_p[f(\mathbf{x},\alpha)],
\end{aligned}
\end{equation}
where $$f(\mathbf{x},\alpha)=\frac{1}{D_{x_t}}\sum^{D_{x_t}}_{j=1}\frac{j}{1+\sum_{d\in D,d\neq d_j}g'_{{x_t}j}(\alpha')}.$$

The expectation cannot be maximized directly because the truth distribution $p(\mathbf{x})$ is unknown. However, we can estimate the expectation by the \textit{empirical MAP} that simply uses finite training observations. A plausible approach for solving this empirical MAP optimization problem is that using the stochastic gradient descent (SGD) algorithm which is a drastic simplification for the expensive gradient descent algorithm. Though the SGD algorithm is a less accurate optimization algorithm compared to the batch algorithm, it is faster in terms of computational time and cheaper in terms of storing memory \cite{murata1998statistical}. Another advantage is that the SGD algorithm is more adaptive to the changing environment in which examples are given sequentially \cite{amari1967theory}.

For our problem, the SGD learning rule is formulated as
\begin{equation}\label{equ.uprule}
\alpha_{t+1}=\alpha_t+\eta_t \nabla f(\mathbf{x}_{t+1},\alpha_t)
\end{equation}
where $\eta_t$ is called learning rate, i.e., a positive value depending on $t$. This updating rule is validated to increase the objective value at each step in terms of expectation, which can be verified by the following theorem.

\begin{theorem}
Using the updating rule (\ref{equ.uprule}), the expectation of average precision increases at each step, i.e.,
$$\mathrm{E}_p[f(\mathbf{x},\alpha_{t+1})]\geq \mathrm{E}_p[f(\mathbf{x},\alpha_t)]$$
\end{theorem}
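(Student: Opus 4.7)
The plan is to show the claim via a first-order Taylor expansion of $f$ in the weight argument around $\alpha_t$, followed by taking expectations in a way that exploits the fact that the SGD direction $\nabla f(\mathbf{x}_{t+1}, \alpha_t)$ is, in expectation over the grand truth distribution $p$, an unbiased estimator of $\nabla J(\alpha_t)$.

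First I would fix a sample $\mathbf{x}$ drawn from $p$ and expand
\begin{equation*}
f(\mathbf{x},\alpha_{t+1}) = f(\mathbf{x},\alpha_t) + \nabla f(\mathbf{x},\alpha_t)^T(\alpha_{t+1}-\alpha_t) + O(\|\alpha_{t+1}-\alpha_t\|^2).
\end{equation*}
Substituting the update rule (\ref{equ.uprule}) gives
\begin{equation*}
f(\mathbf{x},\alpha_{t+1}) - f(\mathbf{x},\alpha_t) = \eta_t\, \nabla f(\mathbf{x},\alpha_t)^T \nabla f(\mathbf{x}_{t+1},\alpha_t) + O(\eta_t^2).
\end{equation*}
Next I would take expectation with respect to the fresh sample $\mathbf{x}_{t+1}\sim p$ which generated the stochastic step. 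Since $\alpha_t$ depends only on past samples and is independent of $\mathbf{x}_{t+1}$, the gradient-expectation interchange yields $\mathrm{E}_p[\nabla f(\mathbf{x}_{t+1},\alpha_t)] = \nabla J(\alpha_t)$, so
\begin{equation*}
\mathrm{E}_{\mathbf{x}_{t+1}}\bigl[f(\mathbf{x},\alpha_{t+1})-f(\mathbf{x},\alpha_t)\bigr] = \eta_t\, \nabla f(\mathbf{x},\alpha_t)^T \nabla J(\alpha_t) + O(\eta_t^2).
\end{equation*}

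Then I would take the outer expectation over $\mathbf{x}\sim p$ and once more interchange expectation with the gradient to obtain
\begin{equation*}
\mathrm{E}_p\bigl[f(\mathbf{x},\alpha_{t+1})\bigr] - \mathrm{E}_p\bigl[f(\mathbf{x},\alpha_t)\bigr] = \eta_t\, \|\nabla J(\alpha_t)\|^2 + O(\eta_t^2).
\end{equation*}
Since $\eta_t>0$ and the leading term is a nonnegative square, the right-hand side is nonnegative once $\eta_t$ is chosen sufficiently small (which is standard in SGD and already assumed by the learning-rate schedule in the algorithm). This gives the desired monotonicity
$\mathrm{E}_p[f(\mathbf{x},\alpha_{t+1})]\geq \mathrm{E}_p[f(\mathbf{x},\alpha_t)]$.

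The main obstacle I anticipate is justifying the differentiability and Taylor-expansion step rigorously: the objective $f$ is defined through the sigmoid surrogate $g'_{ij}$, which was introduced precisely to make the problem differentiable, so I would invoke that surrogate (and the smoothness arguments from Section~3) to guarantee existence and boundedness of the gradient. A secondary technical point is the gradient-expectation interchange, which requires a dominated-convergence style justification; here it is immediate because $f$ is bounded (precisions lie in $[0,1]$) and the sigmoid gradients are uniformly bounded in $\alpha$. Finally, I would remark that the $O(\eta_t^2)$ term is controlled either by taking $\eta_t$ small enough or, as in the usual SGD analysis, by letting $\eta_t\to 0$ so that the expected ascent is asymptotic; this matches the sense in which the theorem's ``$\geq$'' is claimed.
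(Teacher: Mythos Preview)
Your route is genuinely different from the paper's. The paper does not use a Taylor expansion or the unbiased-gradient identity at all; instead it argues \emph{pointwise}. It reduces the claim to showing $f(\mathbf{x},\alpha_{t+1})-f(\mathbf{x},\alpha_t)\geq 0$ for every $\mathbf{x}$, expands this difference in terms of the sigmoid surrogates $g'_{xj}$, rewrites $g'_{xj}(\alpha'_{t+1})-g'_{xj}(\alpha'_{t})$ via the odds $\tau(\alpha')=g'_{xj}(\alpha')/(1-g'_{xj}(\alpha'))$, and then checks the ratio $\tau(\alpha'_t)/\tau(\alpha'_{t+1})=\exp(\beta\eta_t\,\nabla f\cdot s)\geq 1$ directly from the update rule. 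So the paper's argument is algebraic and deterministic at each step, whereas yours is the standard stochastic first-order ascent analysis. What the paper's approach buys (if one accepts its sign computation) is the exact inequality $\geq$ without any small-$\eta_t$ proviso; what yours buys is a transparent, structure-independent argument that works for any smooth $f$.

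That said, your proposal does not actually prove the theorem as stated. After the two expectations you obtain
\[
\mathrm{E}_p[f(\mathbf{x},\alpha_{t+1})]-\mathrm{E}_p[f(\mathbf{x},\alpha_t)]=\eta_t\|\nabla J(\alpha_t)\|^2+O(\eta_t^2),
\]
and the $O(\eta_t^2)$ term carries a Hessian-type contribution that can be negative. Saying ``choose $\eta_t$ sufficiently small'' is not available to you: the theorem is asserted for the specific update rule with the specific schedule $\eta_t=1/t$, not for arbitrarily small steps, and for any fixed $t$ the second-order term need not be dominated by the first. Your closing remark that the ``$\geq$'' is only meant asymptotically is a reinterpretation of the theorem, not a proof of it. To match the paper's claim you would need either a curvature bound on $f$ tied to the concrete sigmoid form (which is exactly what the paper's algebraic manipulation exploits) or a different argument that does not drop to a first-order remainder.
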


\begin{proof}
Since $\mathrm{E}_p[f(\mathbf{x},\alpha_{t+1})]- \mathrm{E}_p[f(\mathbf{x},\alpha_t)]=\mathrm{E}_p[f(\mathbf{x},\alpha_{t+1})-f(\mathbf{x},\alpha_t)]$, we only need to show $f(\mathbf{x},\alpha_{t+1})-f(\mathbf{x},\alpha_t)\geq 0$. \\
Since
\begin{equation*}
\begin{aligned}
f(\mathbf{x},\alpha_{t+1})& -f(\mathbf{x},\alpha_t) = \frac{1}{D_{x}} \\
& \sum^{D_{x}}_{j=1} \left(\frac{j\sum_{d\neq d_j}(g'_{xj}(\alpha'_{t+1})-g'_{xj}(\alpha'_{t}))}{(1+\sum_{d\neq d_j}g'_{xj}(\alpha'_{t+1}))(1+\sum_{d\neq d_j}g'_{xj}(\alpha'_{t}))}\right),
\end{aligned}
\end{equation*}
we need to verify $g'_{xj}(\alpha'_{t+1})-g'_{xj}(\alpha'_{t})\geq 0$. According to the denotation of $g'_{ij}$, we have
$$g'_{xj}(\alpha'_{t+1})-g'_{xj}(\alpha'_{t}) = \frac{\tau(\alpha'_t)-\tau(\alpha'_{t+1})} {(1+\tau(\alpha'_t))(1+\tau(\alpha'_{t+1}))} $$
where $\tau(\alpha'_t)=\frac{g'_{xj}(\alpha'_{t})}{1-g'_{xj}(\alpha'_{t})}$. \\
Since
\begin{equation}\label{equ.eta.role}
\begin{aligned}
\frac{\tau(\alpha'_t)}{\tau(\alpha'_{t+1})} &=\exp(\beta\eta_t \nabla \mathbf{f}(\mathbf{x},\alpha'_t)\mathbf{s}(\phi)) \\
&\geq exp(0) \\
&=1,
\end{aligned}
\end{equation}
we can conclude that
$$\tau(\alpha'_t)-\tau(\alpha'_{t+1}) \geq 0.$$
This completes the proof.
\end{proof}

The learning rate $\eta$ plays an important role in the updating (Equation \ref{equ.eta.role}), hence an adequate $\eta_t$ will enhance the online algorithm to converge. Define $\eta_t=1/t$ in this article, then we have the following well-known properties:
\begin{equation}\label{equ.eta.con1}
\sum_t^\infty \eta_t^2<\infty,
\end{equation}
\begin{equation}\label{equ.eta.con2}
\sum_t^\infty \eta_t=\infty.
\end{equation}

Since it is difficult to analyze the whole process of online algorithm \cite{murata1998statistical}, we will show the convergence property around the global or local optimum in the following analysis.

\begin{lemma}\label{lemma.cond1}
If $\alpha_t$ is in the neighborhood of the optimum $\alpha^*$, we have
\begin{equation}
(\alpha_t-\alpha^*)\nabla f(\mathbf{x},\alpha_t)<0.
\end{equation}
\end{lemma}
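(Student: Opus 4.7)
The plan is to read the inequality as the standard local condition for a (stochastic) gradient ascent step to point toward the maximizer: in a small enough neighborhood of $\alpha^*$, the gradient direction $\nabla f(\mathbf{x},\alpha_t)$ should have positive inner product with $\alpha^* - \alpha_t$, equivalently negative inner product with $\alpha_t - \alpha^*$. So I would prove this by a first-order Taylor expansion of $\nabla f$ around the optimum and then exploit the second-order optimality condition.

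First I would note that since $\alpha^*$ is an (at least local) maximizer of $f(\mathbf{x},\cdot)$ on the interior of the feasible set, we have $\nabla f(\mathbf{x},\alpha^*)=0$ and the Hessian $\mathcal{H}(\alpha^*)$ (as defined in equation \eqref{equ.hess} but for the single-query objective) is negative semidefinite; in the generic case where $\alpha^*$ is an isolated maximum, it is negative definite. Next, I would expand to first order:
\begin{equation*}
\nabla f(\mathbf{x},\alpha_t) \;=\; \nabla f(\mathbf{x},\alpha^*) + \mathcal{H}(\alpha^*)(\alpha_t-\alpha^*) + o(\|\alpha_t-\alpha^*\|) \;=\; \mathcal{H}(\alpha^*)(\alpha_t-\alpha^*) + o(\|\alpha_t-\alpha^*\|),
\end{equation*}
which is justified because the approximated objective in Problem P4 is smooth (the sigmoid surrogate $g'_{ij}$ is $C^\infty$ in $\alpha'$). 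Taking the inner product with $\alpha_t-\alpha^*$ gives
\begin{equation*}
(\alpha_t-\alpha^*)^{\!\top}\nabla f(\mathbf{x},\alpha_t) \;=\; (\alpha_t-\alpha^*)^{\!\top}\mathcal{H}(\alpha^*)(\alpha_t-\alpha^*) + o(\|\alpha_t-\alpha^*\|^{2}).
\end{equation*}

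From here the conclusion follows: by negative definiteness of $\mathcal{H}(\alpha^*)$, the leading quadratic term is bounded above by $-\lambda\|\alpha_t-\alpha^*\|^{2}$ for some $\lambda>0$ (the absolute value of the largest eigenvalue of $\mathcal{H}(\alpha^*)$), so for $\alpha_t$ sufficiently close to but distinct from $\alpha^*$ the $o(\|\cdot\|^2)$ remainder is dominated and the whole expression is strictly negative. This yields exactly
\begin{equation*}
(\alpha_t-\alpha^*)\nabla f(\mathbf{x},\alpha_t) \;<\; 0,
\end{equation*}
which is the claim.

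The main obstacle is the negative-definiteness assumption: the original MAP objective is nonconcave with many local optima (as Figure~\ref{fig.zoom} emphasizes), and even the smoothed objective is only locally concave around sharp local maxima. I would therefore be careful to state the result as holding in a neighborhood of a strict local maximum $\alpha^*$ at which the Hessian is negative definite, and treat the degenerate case (merely negative semidefinite Hessian) by noting that the lemma is used subsequently only for convergence analysis around optima where the second-order sufficient condition holds. A secondary subtlety is that $f(\mathbf{x},\cdot)$ is the per-sample surrogate rather than the full expected objective $J$, so $\alpha^*$ here should be understood as a stationary point of the sample functional; the surrounding convergence argument then handles the stochastic averaging via the step-size conditions \eqref{equ.eta.con1} and \eqref{equ.eta.con2}.
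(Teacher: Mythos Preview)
Your argument is correct and is in fact considerably more detailed than what the paper offers. The paper does not give a proof at all: it simply writes ``The proof is straightforward referring to Equation~\eqref{equ.g.fird}'' (the explicit first-derivative formula) and then restates the conclusion in words, immediately passing to the expectation version $(\alpha_t-\alpha^*)\mathrm{E}_p[\nabla f(\mathbf{x},\alpha_t)]<0$. No Taylor expansion, no Hessian condition, no neighborhood size is discussed.

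Your route---expand $\nabla f$ about $\alpha^*$, use $\nabla f(\mathbf{x},\alpha^*)=0$ and negative definiteness of the Hessian, and absorb the $o(\|\cdot\|^2)$ remainder---is the standard and correct way to justify this type of local ``gradient points toward the optimum'' lemma, and it makes explicit the hypothesis (strict local maximum with negative-definite Hessian) that the paper leaves implicit. You also correctly flag the one genuine subtlety the paper glosses over: the lemma is written for the per-sample gradient $\nabla f(\mathbf{x},\alpha_t)$, whereas $\alpha^*$ is the optimizer of the expected objective $J$, so the clean argument really lives at the level of $\nabla J$ (or equivalently $\mathrm{E}_p[\nabla f]$). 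The paper sidesteps this by immediately asserting the expectation inequality after the lemma; your reading---that the statement should be understood for the expected gradient, with the stochastic averaging handled later via the step-size conditions---is the right interpretation.
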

The proof of is straightforward referring to Equation \ref{equ.g.fird}. This lemma states that the gradient drives the current point towards the maximum $\alpha^*$. In the stochastic process, the following inequality holds
\begin{equation}\label{equ.con.1}
(\alpha_t-\alpha^*)\mathrm{E}_p[\nabla f(\mathbf{x},\alpha_t)]<0.
\end{equation}

\begin{lemma}\label{lemma.cond2}
If $\alpha_t$ is in the neighborhood of the optimum $\alpha^*$, we have
\begin{equation}
\lim_{t\rightarrow \infty} \nabla f(\mathbf{x},\alpha_t)^2 < \infty.
\end{equation}
\end{lemma}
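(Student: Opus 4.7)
The plan is to show that the gradient $\nabla f(\mathbf{x}, \alpha_t)$ stays uniformly bounded on any neighborhood of $\alpha^{*}$, so that its squared norm cannot blow up and the limit in the statement is automatically finite. The key observation is that $f(\mathbf{x},\alpha)$ is built entirely from sigmoid terms $g'_{ij}$, which take values in $(0,1)$, together with sums, products, and a strictly positive denominator $1 + \sum_{d\neq d_j} g'_{ij} \geq 1$. Continuity of $\nabla f$ in $\alpha$, combined with the compactness of a (closed) neighborhood of $\alpha^{*}$, then yields a uniform upper bound on $\|\nabla f(\mathbf{x},\alpha_t)\|^2$, which gives the claim.

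Concretely, I would first write out the partial derivative
$\partial f / \partial \alpha'_k$ using the notation of Section 5.1. Each term in the sum over $j$ takes the form $-j \, G_{ij}^{k}/(1+G_{ij})^2$, where $G_{ij}^{k}$ is a sum of sigmoid derivatives $\partial g'_{ij}/\partial \alpha'_k = \beta \, s_{d_j,d}(\phi_k(q_i))\, g'_{ij}(1-g'_{ij})$. Because $g'_{ij}(1-g'_{ij}) \leq 1/4$, the ranking-score differences $s_{d_j,d}(\phi_k(q_i))$ are bounded (each constituent ranker outputs scores in a bounded range, so their pairwise differences are bounded by some constant $M$), and the denominator is at least $1$, each summand is bounded in absolute value by $\tfrac{1}{4}\beta M$. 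Summing over at most $|D|-1$ documents and $|D_i|$ relevant items yields a finite bound on every component of $\nabla f(\mathbf{x},\alpha_t)$ that is independent of $\alpha_t$ itself.

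Next, I would conclude the argument by noting that $\nabla f(\mathbf{x},\cdot)$ is a composition of smooth (sigmoid) functions and polynomials in $\alpha$, hence continuous everywhere. The hypothesis that $\alpha_t$ lies in a neighborhood of $\alpha^{*}$ lets us restrict attention to a closed ball around $\alpha^{*}$, on which continuity of $\|\nabla f(\mathbf{x},\cdot)\|^2$ combined with compactness gives a uniform finite supremum. Therefore $\limsup_{t\to\infty}\|\nabla f(\mathbf{x},\alpha_t)\|^2 < \infty$, which is exactly the claim. Moreover, if additionally $\alpha_t \to \alpha^{*}$ and $\alpha^{*}$ is an interior local optimum, continuity forces $\nabla f(\mathbf{x},\alpha_t) \to \nabla f(\mathbf{x},\alpha^{*}) = 0$, giving the stronger conclusion that the limit is $0$.

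The only genuinely non-trivial point is justifying the uniform bound on the ranking-score differences $s_{d_j,d}(\phi_k(q_i))$; this is not stated explicitly as a hypothesis but is standard in practice, since each constituent ranker $\phi_k$ produces scores in a bounded (often normalized) range. Once this mild boundedness assumption is made, the proof is essentially a one-line appeal to continuity on a compact set, so I expect no real obstacle beyond making that assumption explicit.
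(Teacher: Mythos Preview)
Your proposal is correct and, in fact, cleaner than the argument the paper gives. The paper's own proof (relegated to an appendix and labelled a ``sketch'') does not bound $\|\nabla f(\mathbf{x},\alpha_t)\|^2$ directly. Instead it attempts to control the successive differences $\nabla f(\mathbf{x},\alpha_{t+1})^2-\nabla f(\mathbf{x},\alpha_t)^2$ by expanding the sigmoid expressions, using $g(1-g)<\frac{1}{2+\exp(\beta\sum\eta\nabla f\, s)}$, and then arguing that the resulting bound is a summand of a convergent series in $t$ (via $\eta_t=1/t$), so that the sequence is bounded with ``damped oscillations.'' This route relies on the specific SGD update and on $\sum_t\eta_t^2<\infty$, and the sketch leaves several inequalities unjustified.

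By contrast, your approach exploits a structural fact that makes all of that machinery unnecessary: every factor in $\partial f/\partial\alpha'_k$ is uniformly bounded --- $g'_{ij}(1-g'_{ij})\le 1/4$, the denominator $(1+G_{ij})^2\ge 1$, and the score differences $s_{d_j,d}$ are bounded constants of the data --- so $\|\nabla f\|^2$ is bounded by a finite constant independent of $\alpha_t$. The compactness/continuity remark is then even stronger than needed. What you gain is a one-line, $t$-independent bound that does not use the update rule at all; what the paper's approach would buy (if carried through) is a link between the decay of increments and the learning-rate schedule, but for the purpose of this lemma that extra structure is superfluous. Your only added assumption --- boundedness of the ranker scores --- is implicit in the paper as well, since the scores are cosine similarities in $[-1,1]$.
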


The proof is given in the Appendix. For the stochastic nature, the expectation of $\nabla f(\mathbf{x},\alpha_t)^2$ also converges almost surely, i.e.,
\begin{equation}\label{equ.con.2}
\lim_{t\rightarrow \infty} \mathrm{E}_p[\nabla f(\mathbf{x},\alpha_t)^2] < \infty.
\end{equation}

\begin{theorem}[\cite{bottou1998online}]
In the neighborhood of the maximum $\alpha^*$, the recursive variables $\alpha$ converge to the maximum, i.e.,
\begin{equation}
\lim_{t\rightarrow \infty}\alpha_t=\alpha^*.
\end{equation}
\end{theorem}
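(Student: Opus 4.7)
The plan is to run a Lyapunov-function argument on the squared distance to the optimum $h_t := \|\alpha_t - \alpha^*\|^2$, using the two lemmas (Lemma \ref{lemma.cond1} and Lemma \ref{lemma.cond2}) together with the stepsize conditions (\ref{equ.eta.con1}) and (\ref{equ.eta.con2}). First I would apply the SGD update rule (\ref{equ.uprule}) to expand
\begin{equation*}
h_{t+1} = h_t + 2\eta_t (\alpha_t-\alpha^*)^T \nabla f(\mathbf{x}_{t+1},\alpha_t) + \eta_t^2 \|\nabla f(\mathbf{x}_{t+1},\alpha_t)\|^2,
\end{equation*}
then condition on the sigma-field generated by $\mathbf{x}_1,\dots,\mathbf{x}_t$ and take expectation under $p$. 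Using (\ref{equ.con.1}) the cross term is non-positive in expectation, so it acts as a drift that pushes $h_t$ downward; by (\ref{equ.con.2}) and (\ref{equ.eta.con1}), the residual second-order contribution is summable in expectation.

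Next I would invoke the Robbins-Siegmund quasi-martingale convergence theorem on the nonnegative process $h_t$: the negative drift plus a summable perturbation forces $h_t$ to converge almost surely to some finite limit, and simultaneously yields the summability
\begin{equation*}
\sum_{t=1}^{\infty} \eta_t \bigl|(\alpha_t-\alpha^*)^T \mathrm{E}_p[\nabla f(\mathbf{x},\alpha_t)]\bigr| < \infty \quad \text{a.s.}
\end{equation*}
Combined with the non-summability condition (\ref{equ.eta.con2}), this forces $(\alpha_t-\alpha^*)^T \mathrm{E}_p[\nabla f(\mathbf{x},\alpha_t)] \to 0$ along a subsequence; by continuity of $\nabla f$ in a neighborhood of $\alpha^*$ and the strict-sign property from Lemma \ref{lemma.cond1}, this in turn forces $h_t \to 0$ along that subsequence, and then, because the full sequence $h_t$ already converges, the whole limit must be zero.

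The main obstacle I anticipate is ruling out the scenario in which $h_t$ converges almost surely to a strictly positive limit. The Robbins-Siegmund step gives convergence of $h_t$ for free, but not necessarily to zero; one has to combine it carefully with $\sum \eta_t = \infty$ to argue that the drift term cannot remain bounded away from zero. A second, more delicate point is keeping the iterates inside the neighborhood of $\alpha^*$ where Lemmas \ref{lemma.cond1} and \ref{lemma.cond2} apply; strictly speaking one would need a localization argument (for instance, stopping the process at the first exit time from a ball around $\alpha^*$ and then showing that exit has probability zero for initializations sufficiently close to $\alpha^*$), which is precisely the local-convergence flavor of Bottou's original result and why the statement is phrased as convergence \emph{in the neighborhood of} $\alpha^*$.
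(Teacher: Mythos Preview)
Your proposal is correct and follows essentially the same route as the paper: define the Lyapunov sequence $h_t=\|\alpha_t-\alpha^*\|^2$, expand via the update rule (\ref{equ.uprule}), use (\ref{equ.con.1}) to make the cross term a negative drift and (\ref{equ.con.2}) with (\ref{equ.eta.con1}) to make the squared-gradient term summable, then apply a quasi-martingale convergence theorem (the paper cites Fisk rather than Robbins--Siegmund, but it is the same device) to get $h_t$ convergent, and finally combine with (\ref{equ.eta.con2}) to force the drift term, and hence $h_t$, to zero. Your write-up is in fact more careful than the paper's on the subsequence step and on the localization caveat, but the skeleton is identical.
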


\begin{proof}
Define a sequence of positive numbers whose values measure the distance from the optimum, i.e.,
\begin{equation}
h_{t+1}-h_{t}=(\alpha_t-\alpha^*)^2.
\end{equation}
The sequence can be written as an expectation under the stochastic nature, i.e.,
\begin{equation}\label{equ.h.var}
\mathrm{E}_p[h_{t+1}-h_{t}]=2\eta_t(\alpha_t-\alpha^*) \mathrm{E}_p[\nabla f(\mathbf{x},\alpha_t)]+\eta_t^2 \mathrm{E}_p[\nabla f(\mathbf{x},\alpha)^2]
\end{equation}
Since the first term on the right hand side is negative according to (\ref{equ.con.1}), we can obtain the following bound:
\begin{equation}
\mathrm{E}_p[h_{t+1}-h_{t}]\leq \eta_t^2 \mathrm{E}_p[\nabla f(\mathbf{x},\alpha_t)^2].
\end{equation}
Conditions (\ref{equ.eta.con2}) and (\ref{equ.con.2}) imply that the right hand side converges. According to the quasi-martingale convergence theorem \cite{fisk1965quasi}, we can also verify that $h_t$ converges almost surely.  This result implies the convergence of the first term in (\ref{equ.h.var}).

Since $\sum_t^\infty \eta_t$ does not converge according to (\ref{equ.eta.con1}), we can get
\begin{equation}
\lim_{t\rightarrow \infty}(\alpha_t-\alpha^*) \mathrm{E}_p[\nabla f(\mathbf{x},\alpha_t)]=0.
\end{equation}
This result leads to the convergence of the online algorithm, i.e.,
\begin{equation*}
\lim_{t\rightarrow \infty}\alpha_t=\alpha^*.
\end{equation*}
This completes the proof.
\end{proof}

Based on the learning rule (\ref{equ.uprule}), the online algorithm for achieving the ensemble model is summarized below.
\begin{algorithm}[h!]
  \caption{gEnM.ON (Generalized Ensemble Model by Online Algorithm.)}
  \label{alg.5}
  \begin{algorithmic}[1]
  \REQUIRE Query set $Q$, document set $D$, relevant document set $|D_i|$ with respect to $q_i\in Q$, ranking scores $s_{d}(\phi_k(q_i))$ with respect to $i$the query, $k$th method $\phi_k$ and document $d\in D$, a number of initial points $\bm{\alpha}_p$ and a threshold $\epsilon>0$ for stopping the algorithm.
  \FOR{each $\bm{\alpha}_p$}
  \STATE Set iteration counter $t=1$;
  \STATE Evaluate $\Lambda'^t$;
  \REPEAT
  \FOR{each $q_i\in Q$}
    \STATE Set $t=t+1$;
    \STATE Compute gradient $\nabla_{\bm{\alpha}_p^{t-1}}\Lambda'$ with respect to $q_i$ \hfill(Algorithm \ref{alg.2});
    \STATE Update $\bm{\alpha}_p^{t}=\bm{\alpha}_p^{t-1}+\frac{1}{t}\nabla_{\bm{\alpha}_p^{t-1}}\Lambda'$;
  \ENDFOR
  \STATE Evaluate $\Lambda'^t$;
  \UNTIL{$|\Lambda'^{t}-\Lambda'^{t-1}|<\epsilon$}
  \STATE Store $\bm{\alpha}_p^t$
  \ENDFOR
  \RETURN $\bm{\alpha}$'s.
  \end{algorithmic}
\end{algorithm}

\subsection{Unsupervised Algorithm: UnsEnM}

The proceeding proposed algorithms for both batch setting and online setting are based on the knowledge of labeled data, which has been regarded as supervised learning. As a matter of fact, in the community of conventional information retrieval systems, labeled data are difficult to obtain in general. Under this condition, unsupervised learning plays a crucial role. The inspiration of unsupervised algorithm for solving Problem P4 comes from the idea of co-training that is based on the belief that each constituent ranker in the ensemble model can provide valuable information to the other constituent rankers such that they can co-learn from each other \cite{wei2010irank}. In order to utilize this collaborative learning scheme, the gEnM requires all constituent rankers are generated by unsupervised learning. In each round, the ranking scores of one of the constituent rankers are provided as \textit{fake} labeled data for other rankers to refine the weights. Iteratively learning from the constituent rankers, the ensemble model may result in an overall improvement in terms of MAP.

We modify the objective function in Problem P4 by adding a penalty item so that the refined ranking does not depend on the fake label too much. The modified objective function is defined as
\begin{equation*}\label{prob.unsup}
  \begin{aligned}
    &\mathrm{max}
    & &\Lambda'-\frac{1}{2}\sigma\sum_{q_i \in Q}\sum_{d \in D}\sum_{\phi_k\in \Phi}\left\|H_d(q_i)-s_d(\phi_k(q_i))\right\|^2\\
  \end{aligned}\eqno(P8)
  \end{equation*}
where $H_d(q_i)=\sum_k^{k\in K_{\phi}}\alpha_k s_d(\phi_k(q_i))$. \\

Let $\Gamma$ denote the objective function in Problem P8. The second derivatives of $\Gamma$ can be written as follows:
\begin{equation}
\frac{\partial\Gamma}{\partial\alpha_k\alpha_l}=\frac{\partial^2\Lambda'}{\partial\alpha_k\alpha_l}
-\sigma\sum_{q_i \in Q}\sum_{d \in D}\left(s_d(\phi_k(q_i))\cdot s_d(\phi_l(q_i))\right)
\end{equation}
The approximation of Hessian matrix reported in Algorithm 2 can be employed here, however, it is time-consuming doing so since the unsupervised algorithm requires a large number of iterations to converge and the Hessian should be calculated at each iteration. Therefore, the learning rule of the online algorithm gEnM.ON is applied for the unsupervised algorithm. It is noteworthy that the gEnM.ON can be effortlessly modified to fit this unsupervised co-training scheme. The algorithm is described below.

\begin{algorithm}[h!]
  \caption{UnsEnM (\underline{Uns}upervised \underline{En}semble \underline{M}odel.)}
  \label{alg.6}
  \begin{algorithmic}[1]
  \REQUIRE Query set $Q$, document set $D$, ranking scores $s_{d}(\phi_k(q_i))$ with respect to $i$the query, $k$th method $\phi_k$ and document $d\in D$, a number of initial points $\bm{\alpha}_p$, a threshold $\epsilon_s$ for $s_{d}(\phi_k(q_i))$ to choose fake relevant documents and a threshold $\epsilon>0$ for stopping the algorithm.
  \FOR{each $\bm{\alpha}_p$}
  \STATE Set iteration counter $t=1$;
  \STATE Evaluate $\Lambda'^t$;
  \REPEAT
  \FOR{each $\phi_k\in \Phi$}
    \STATE Set $t=t+1$;
    \STATE Refresh fake relevant document set $|D_i|=\emptyset$;
    \STATE Construct $\hat{s}_{d}$ that excludes $s_{d}(\phi_k)$;
    \STATE Construct $\bm{\alpha}_p$ that excludes $\alpha_{\phi_k}$;
    \FOR{$q_i\in Q$}
        \IF{$s_{d}(\phi_k(q_i))>\epsilon_s$}
            \STATE Construct fake relevant document set $|D_i|\leftarrow i\cup |D_i|$;
        \ENDIF
    \ENDFOR
    \STATE Compute gradient $\nabla_{\bm{\alpha}_p^{t-1}}\Lambda'$; \hfill(Algorithm \ref{alg.2})
    \STATE Update $\bm{\alpha}_p^{t}=\bm{\alpha}_p^{t-1}+\frac{1}{t}\nabla_{\bm{\alpha}_p^{t-1}}\Lambda'$;
  \ENDFOR
  \STATE Reconstruct $\bm{\alpha}_p$ that includes $\alpha_{\phi_k}$;
  \STATE Evaluate $\Lambda'^{t}$;
  \UNTIL{$|\Lambda'^{t}-\Lambda'^{t-1}|<\epsilon$}
  \STATE Store $\bm{\alpha}_p^t$
  \ENDFOR
  \RETURN $\bm{\alpha}$'s.
  \end{algorithmic}
\end{algorithm}

\section{Empirical Experiment}

\subsection{Experiment Setup}
The proposed methods were evaluated on four standard medium-sized ad-hoc document collections, i.e., MED, CRAN, CISI and CACM, which can be accessed freely from the SMART IR System\footnote{Available at: ftp://ftp.cs.cornell.edu/pub/smart.}. In order to test the proposed methods on heterogeneous data, we utilized the merged collection (MC) advocated by \cite{wang2015indexing}, which combines the four collections. The basic statistics of the test data are summarized in Table \ref{tab.med}. The following minimum pre-processing measures were taken for the collections before evaluating the proposed methods:
\begin{enumerate*}[label=\itshape\alph*\upshape)]
  \item stop words were removed from the corpus by referring to a list of 571 stop words provided by SMART\footnotemark[1];
  \item special symbols, including hyphenation marks, were removed; and
  \item those words with unique appearances in the corpus were removed.
\end{enumerate*}
We note that the incomplete documents and queries in CISI and CACM were retained in the experiments.

\begin{table}[h!]
\centering
\caption{Data characteristics.}
\label{tab.med}
\begin{tabular}{ccccc}
  \hline
  Data & Subject & Document \# & Query \# & Term \# \\
  \hline
  MED & Medicine & 1,033 & 30 & 5,775 \\
  CRAN & Aeronautics & 1,400 & 225 & 8,213 \\
  CISI & Library & 1,460 & 112 & 10,170 \\
  CACM & Computer & 3,204 & 64 & 9,961 \\
  MC & Multiplicity & 7,097 & 431 & 27,784 \\
  \hline
\end{tabular}
\end{table}

The constituent rankers, in essence, are important factors that influence the results. Four rankers recommended by \cite{wang2015indexing}, namely \textit{tf-idf}-based ranker (TFIDF) \cite{salton1986introduction}, Latent Semantic Analysis (LSA) \cite{deerwester1990indexing}, probabilistic Latent Semantic Indexing (pLSI) \cite{hofmann1999probabilistic}, Indexing by Latent Dirichlet Allocation (LDI) \cite{wang2015indexing}, were utilized in this paper for assembling the gEnM. In brief, TFIDF represents documents by a tf-idf weighted matrix; LSA projects each document into a lower dimensional conceptual space by applying Singular Value Decomposition (SVD); pLSI is a probabilistic version of LSA; and LDI represents each document by a probabilistic distribution over shared topics based on Latent Dirichlet Allocation (LDA) \cite{blei2003latent}. These rankers are all unsupervised rankers and thus are trivial to be trained in the unsupervised setting. In addition to this training requirement, the rankers contain different information describing each corpus, such as information of keyword matching, concepts, or topics.

Since the four rankers represent documents and queries into vectors, the ranking scores are the cosine distances (or cosine similarities) between the vectors of documents and queries. Subsequently, the ranking scores of gEnM can be generated with appropriate adjustments to the weights being made for the ranking scores of the four rankers. For formulating Problem P4, we set $\beta=200$. Finally, the proposed algorithms can be implemented to calculate the optimal weights for gEnM.

In order to address the over-fitting problem of batch algorithms, we adopted the two-fold cross validation for testing the gEnM.BAT and gEnM.ON. A difference for the gEnM.ON is that the training queries and corresponding relevant documents were given sequentially at each step. The performance metric was the mean value of the MAPs in the two-fold cross validation. As for the UnsEnM, the ranking scores of different constituent rankers are provided as labeled data for other rankers in different rounds. The UnsEnM was then evaluated by means of MAP on the real labeled data.

As discussed in Section \ref{sec.alg}, the proposed algorithms would benefit from different initial weights. Choosing the proper initial points for nonlinear program is an open research issue. In our tests, we utilized the operational criterion of selecting the best. In other words, we tested performances for different initial weights and selected the one that generated the maximum retrieval performance in terms of MAP. In this experiment, we first set the initial weights to binary elements, i.e., $\bm{\alpha}\in \mathbf{B}^4$. The reason of doing so lies in that the constituent rankers are initially active in some of the rankers and inactive in others, which reflects our heuristics at the first step. Since the EnM has been shown prior to the four basis rankers by \cite{wang2015indexing}, the EnM model was used as baseline methods for comparison.

\subsection{Experimental Results}

The experimental results are shown in Table \ref{tab.map}. We have considered three measures for comparing the performances of the proposed algorithms: mean average precision (MAP), (average) precision at one document (Pr@1), and (average) precision at five documents (Pr@5). Indeed, the gEnM performance is always better than the EnM. Since the EnM is also solved by a batch algorithm, we conduct the Wilcoxon signed rank test to evaluate the difference between EnM and gEnM.BAT. We see that, in some cases, the difference is statistically significant with a 95\% confidence. We emphasize that the Pr@1 of gEnM is 48\% higher than that of EnM for the CISI data set and is close to 100\% for the MED. In other words, the retrieved documents by gEnM are more relevant at high ranking positions, which is desirable from the user's point of view.

\begin{table*}
\centering
\caption{Comparison of the algorithms for gEnM and baseline methods. Pr@1 denotes the precision at one document and Pr@5 the precision at five documents. An asterisk (*) indicates a statistically significant difference between EnM and gEnM.BAT with a 95\% confidence according to the Wilcoxon signed rank test.}
\label{tab.map}
\begin{tabular}{l|c|ccccl}
  \hline
   Collection & Measure & EnM & gEnM.BAT & gEnM.ON & UnsEnM & impr(\%) \\
  \hline
  \multirow{3}{*}{MED} & MAP & 0.6420 & 0.6458 & \textbf{0.6467} & 0.6465 & +0.6 \\
  & Pr@1 & 0.8667 & 0.9333 & \textbf{0.9333} & 0.9333 & +7.7*\\
  & Pr@5 & 0.7867 & 0.8133 & \textbf{0.8133} & 0.8133 & +3.4*\\
  \hline
  \multirow{3}{*}{CRAN} & MAP & 0.3766 & 0.3937 & \textbf{0.3972} & 0.3972 & +4.5\\
  & Pr@1 & 0.6133 & 0.6622 & \textbf{0.6667} & 0.6356 & +8.0*\\
  & Pr@5 & 0.3742 & \textbf{0.4080} & 0.3991 & 0.4018 & +9.0*\\
  \hline
  \multirow{3}{*}{CISI} & MAP & 0.1637 & \textbf{0.1945} & 0.1816 & 0.1825 & +18.8*\\
  & Pr@1 & 0.3289 & \textbf{0.4868} & 0.3684 & 0.3947 & +48.0* \\
  & Pr@5 & 0.2974 & \textbf{0.3237} & 0.2868 & 0.3079 & +8.8 \\
  \hline
  \multirow{3}{*}{CACM} & MAP & 0.1890 & 0.2166 & \textbf{0.2256} & 0.1745 & +14.6* \\
  & Pr@1 & 0.3654 & 0.3846 & \textbf{0.4423} & 0.3077 & +5.3 \\
  & Pr@5 & 0.2192 & 0.2500 & \textbf{0.2538} & 0.2000 & +14.1* \\
  \hline
  \multirow{3}{*}{MC} & MAP & 0.2768 & 0.3162 & 0.3099 & \textbf{0.3169} & +14.2* \\
  & Pr@1 & 0.4204 & 0.5196 & 0.5300 & \textbf{0.5274} & +23.6* \\
  & Pr@5 & 0.307 & 0.3614 & 0.3624 & \textbf{0.3629} & +17.7* \\
  \hline
\end{tabular}
\end{table*}




From Table \ref{tab.map}, we also see that the performance of gEnM.ON is better than the gEnM.BAT. The slight priority of gEnM.ON is due to the approximation of Hessian for the gEnM.BAT. However, the gEnM.ON is more expensive than gEnM.BAT because of iterative use of queries for calculation. Having said that, gEnM.ON can be used in a specific system where data are given in sequence. Since the knowledge of relevant documents is unknown in unsupervised learning, the performance of UnsEnM is inferior to the supervised algorithms. However, the results on the more heterogeneous data set MC are surprisingly the best among the proposed algorithms. The supervised algorithm may work well when tested against similar queries and documents in the homogeneous data. Yet the unsupervised algorithm does not fit the training data as much as the supervised algorithm does and thus the superiority becomes more obvious when tested on more heterogeneous data.

Figure \ref{fig.pr_4} shows the precision-recall curves of the examined methods.

\begin{figure}[h!]
  \centering
  \includegraphics[width=0.5\textwidth]{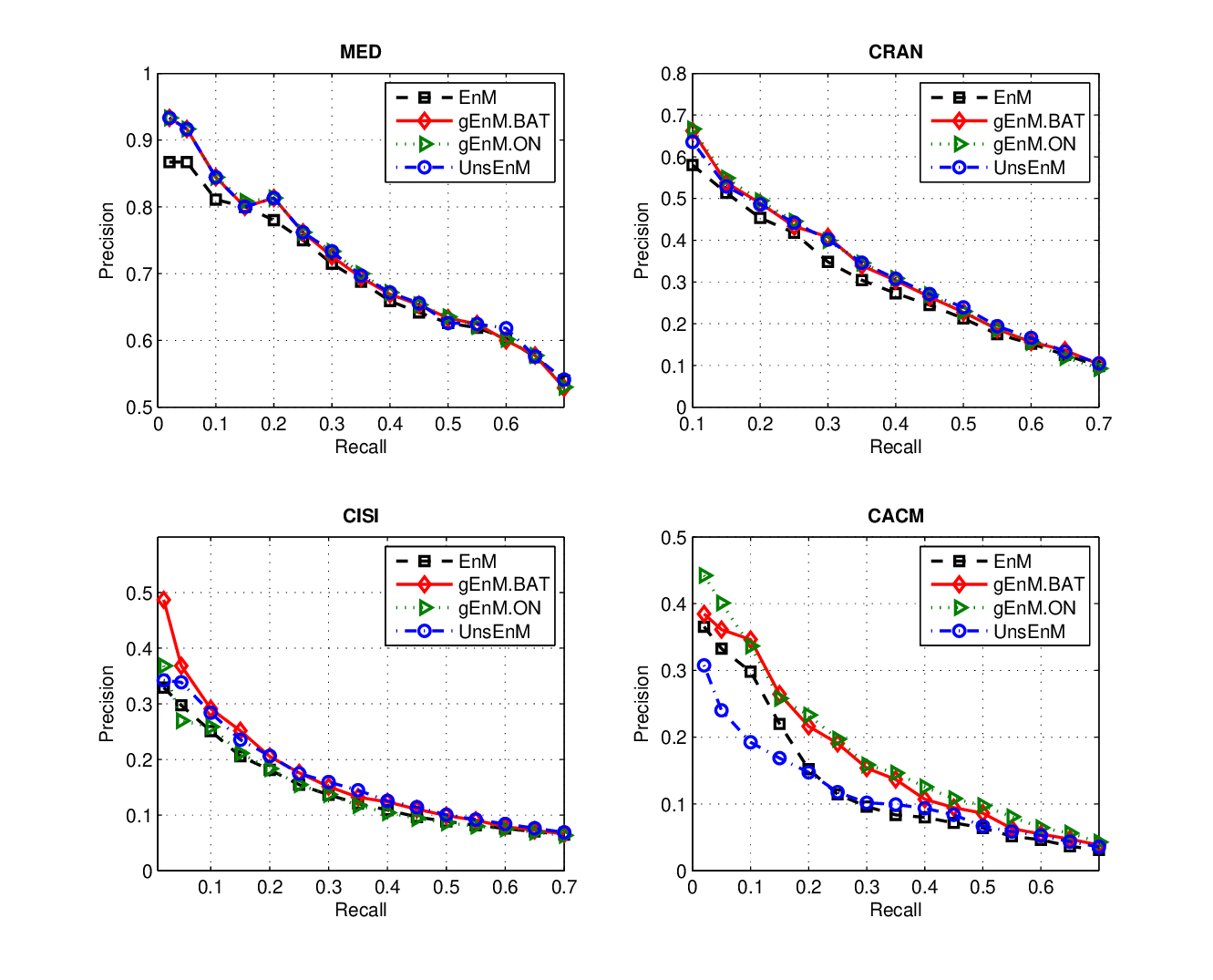}
  \caption{Precision-Recall Curves for the testing data sets.}\label{fig.pr_4}
\end{figure}

\begin{figure}[h!]
\ContinuedFloat
  \centering
  \includegraphics[width=0.5\textwidth]{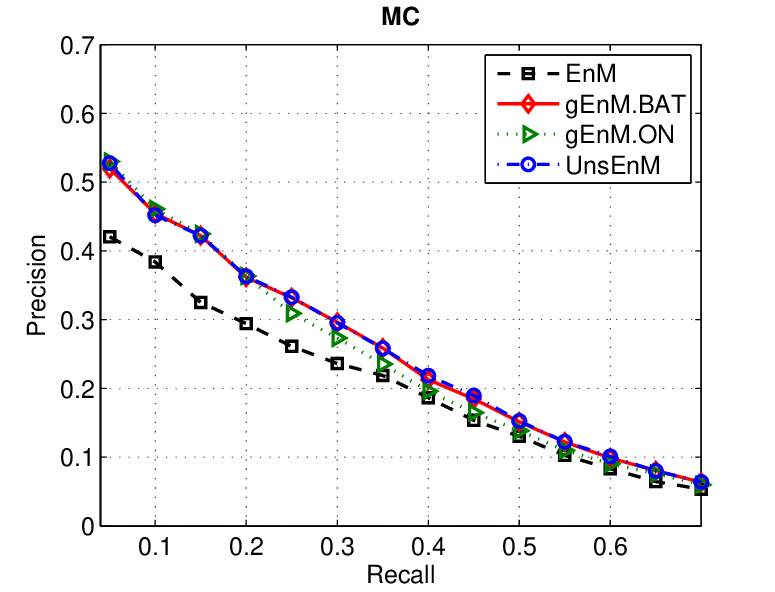}
  \caption[]{Precision-Recall Curves for the testing data sets. (continued)}\label{fig.pr_mc}
\end{figure}

For illustrating the learning abilities of the gEnM.ON and UnsEnM, the learning curves on the MED data are reported in Figure \ref{fig.l.curv1}. The results on the other data sets are very similar. The tolerance is set to $1e-4$ and the number of iteration is set to at least $10$ in order to clearly view the changes of objective. The online learning curves validates the convergence property of gEnM.ON. Amongst these curves, several scenarios, such as when $\alpha=(1,1,1,1)^T$ and $\alpha=(1,0,0,0)^T$, imply that the gEnM.ON may occasionally fail for some queries that are not similar to the previous sequences and not near the local optimum. With the increase of iterations, however, the impact of those queries may mitigate due to the majority effect. Apart from these specific cases, the gEnM.ON is able to gradually learn from the sequences, which is consistent with the theoretical analysis.

The UnsEnM also converges with the increase of iterations. We can see that in the case of $\alpha=(1,0,0,0)^T$ a ranker which is regarded as supervised labels may dramatically decrease the objective function. In most cases, the impact of such rankers can be balanced out by other rankers. As a matter of fact, this phenomenon is similar to gEnM.ON since the data are given sequentially in both cases.

\newsavebox{\mybox}
\newcolumntype{X}[1]{%
>{\begin{lrbox}{\mybox}}%
c%
<{\end{lrbox}\makecell[#1]{\usebox\mybox}}%
}

\begin{sidewaysfigure*}\tiny
\centering
  \begin{tabular}{X{cc}X{cc}X{cc}X{cc}X{cc}}
  \hline
  Initial $\alpha$ & (0;0;0;0) & (1;1;1;1) & (1;0;0;0) & (0;1;0;0) \\
  \hline
gEnM.ON &\includegraphics[width=4.5cm]{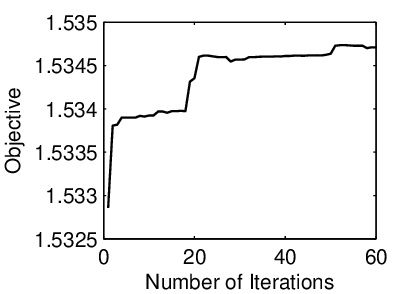}& \includegraphics[width=4.5cm]{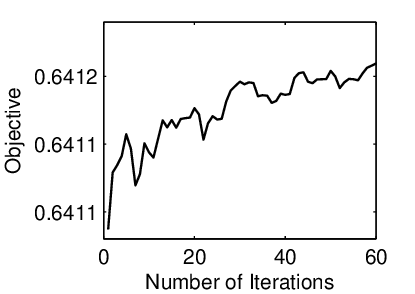} & \includegraphics[width=4.5cm]{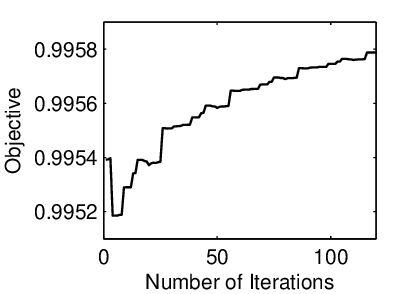} & \includegraphics[width=4.5cm]{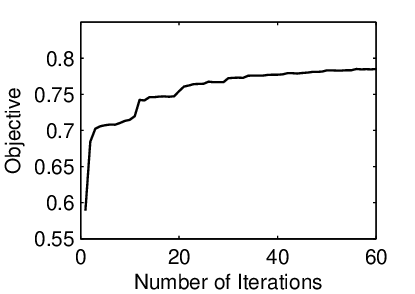}\\
UnsEnM &\includegraphics[width=4.5cm]{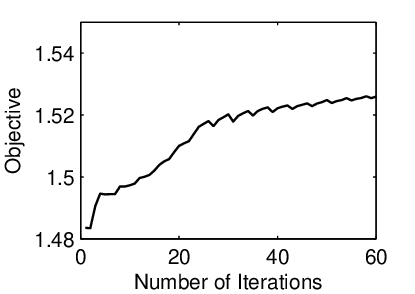}& \includegraphics[width=4.5cm]{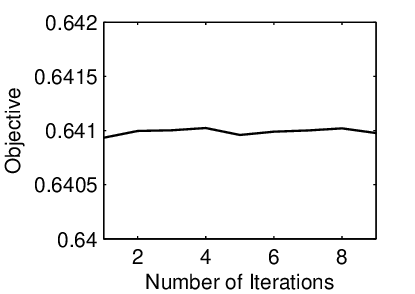} & \includegraphics[width=4.5cm]{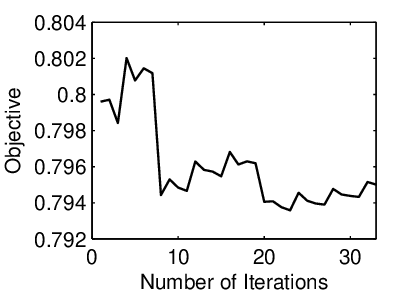} & \includegraphics[width=4.5cm]{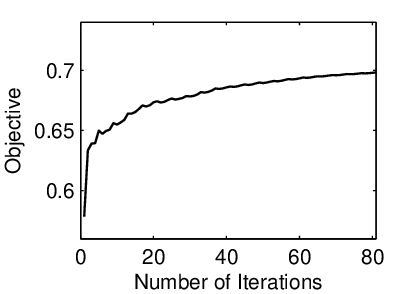} \\
\hline
  \end{tabular}
  \begin{tabular}{X{cc}X{cc}X{cc}X{cc}X{cc}}
  Initial $\alpha$ & (0;0;1;0) & (0;0;0;1) & (1;1;0;0) & (1;0;1;0) \\
  \hline
gEnM.ON &\includegraphics[width=4.5cm]{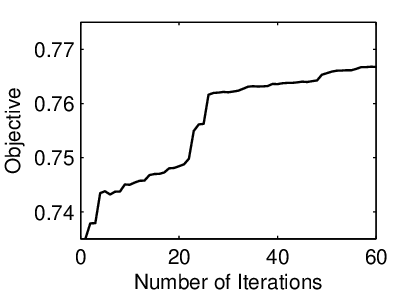}& \includegraphics[width=4.5cm]{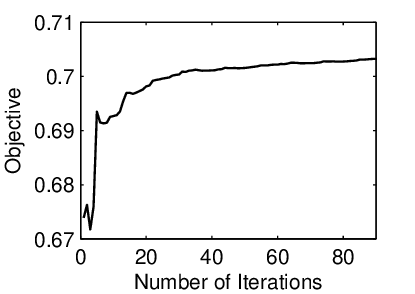} & \includegraphics[width=4.5cm]{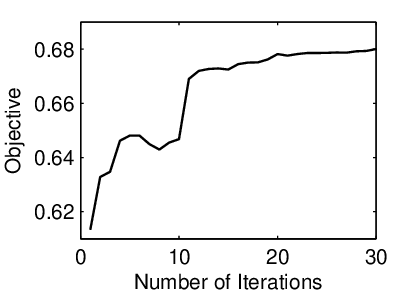} & \includegraphics[width=4.5cm]{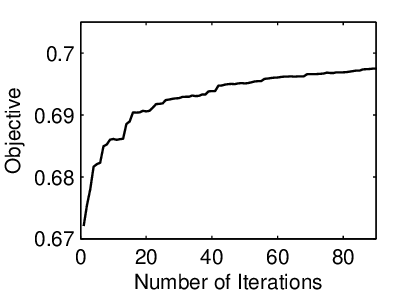}\\
UnsEnM &\includegraphics[width=4.5cm]{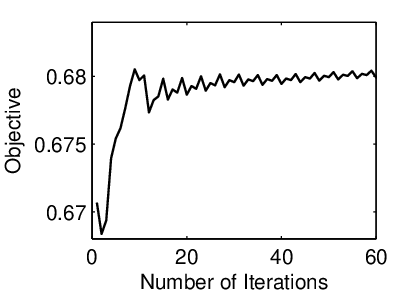}& \includegraphics[width=4.5cm]{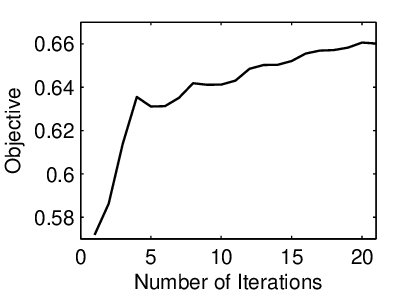} & \includegraphics[width=4.5cm]{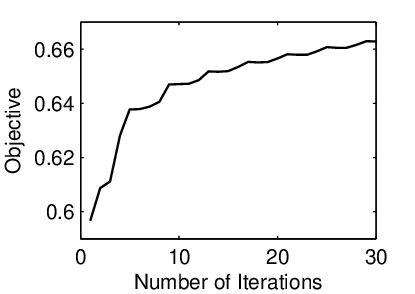} & \includegraphics[width=4.5cm]{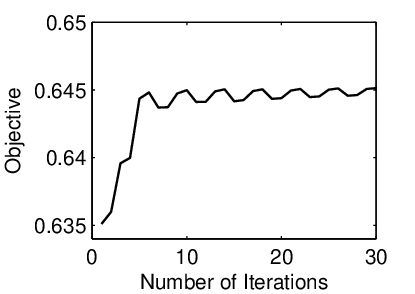} \\
\hline
  \end{tabular}
  \caption{Learning curves of EnM.ON and UnSEnM with different initial points on MED.}
  \label{fig.l.curv1}
\end{sidewaysfigure*}

\begin{sidewaysfigure*}\tiny
\ContinuedFloat
\centering
  \begin{tabular}{X{cc}X{cc}X{cc}X{cc}X{cc}}
  \hline
  Initial $\alpha$ & (1;0;0;1) & (0;1;1;0) & (0;1;0;1) & (0;0;1;1) \\
  \hline
gEnM.ON &\includegraphics[width=4.5cm]{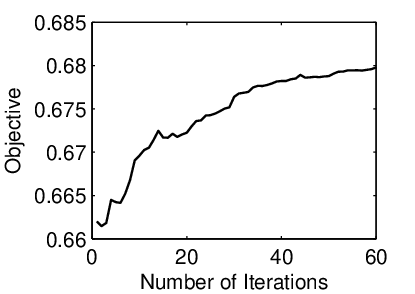}& \includegraphics[width=4.5cm]{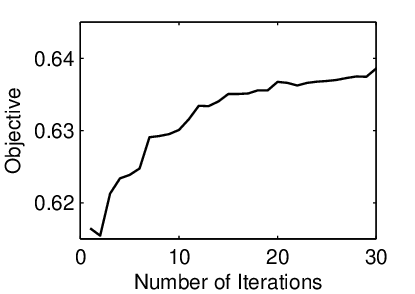} & \includegraphics[width=4.5cm]{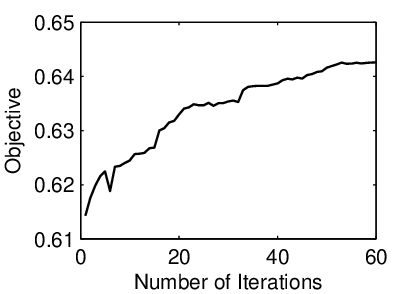} & \includegraphics[width=4.5cm]{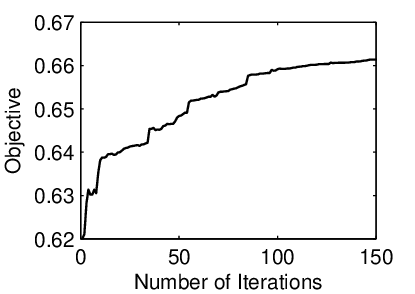}\\
UnsEnM &\includegraphics[width=4.5cm]{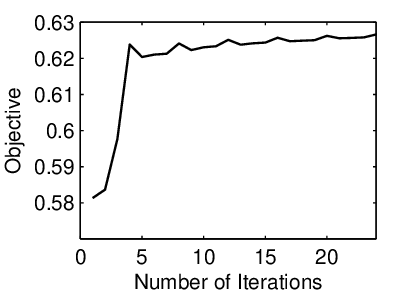}& \includegraphics[width=4.5cm]{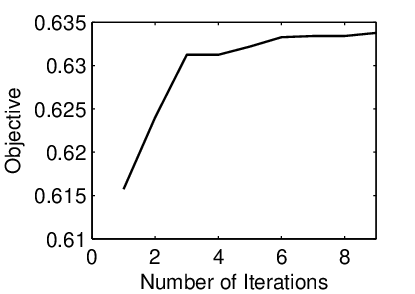} & \includegraphics[width=4.5cm]{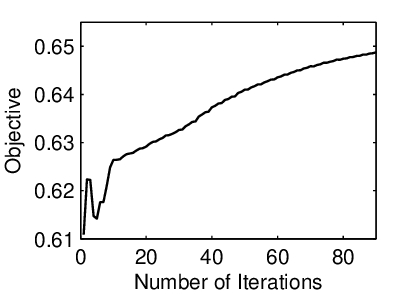} & \includegraphics[width=4.5cm]{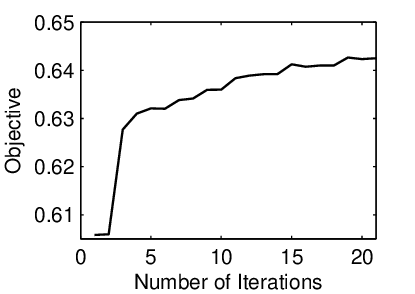} \\
\hline
  \end{tabular}
  \begin{tabular}{X{cc}X{cc}X{cc}X{cc}X{cc}}
  Initial $\alpha$ & (1;1;1;0) & (0;1;1;1) & (1;0;1;1) & (1;1;0;1) \\
  \hline
gEnM.ON &\includegraphics[width=4.5cm]{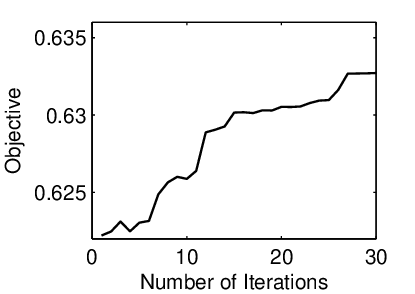}& \includegraphics[width=4.5cm]{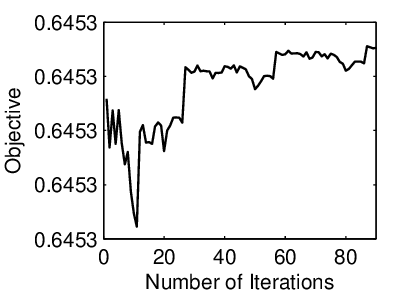} & \includegraphics[width=4.5cm]{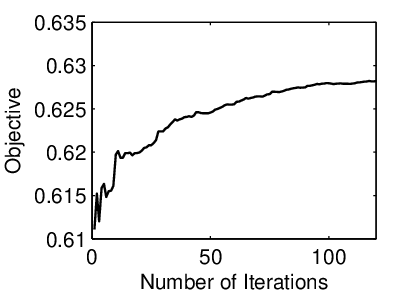} & \includegraphics[width=4.5cm]{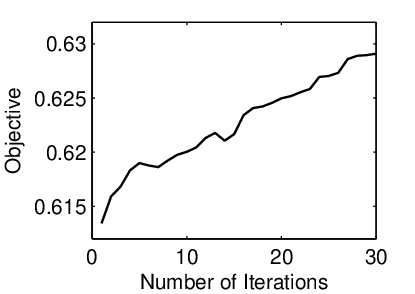}\\
UnsEnM &\includegraphics[width=4.5cm]{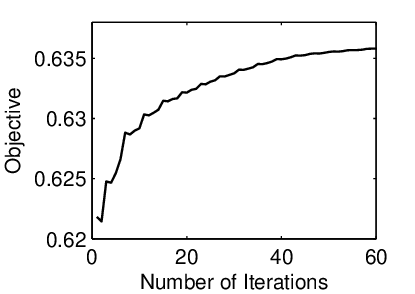}& \includegraphics[width=4.5cm]{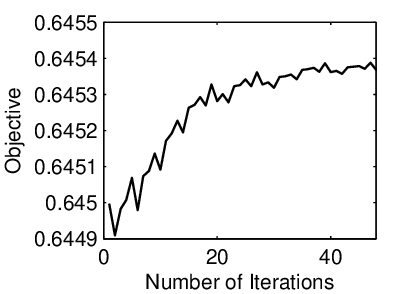} & \includegraphics[width=4.5cm]{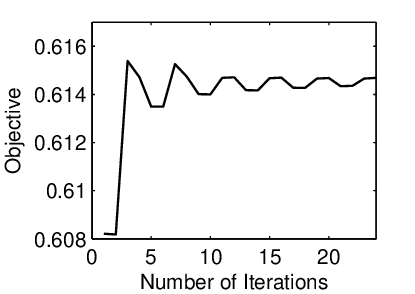} & \includegraphics[width=4.5cm]{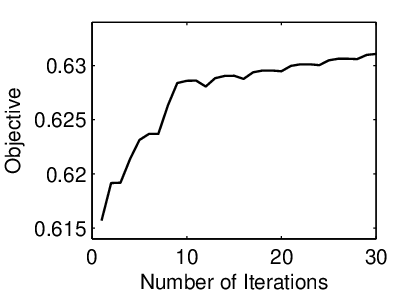} \\
\hline
  \end{tabular}
  \caption[]{Learning curves of EnM.ON and UnSEnM with different initial points on MED. (continued)}
  \label{fig.l.curv2}
\end{sidewaysfigure*}

\section{Conclusions and Discussions}
In this paper, we propose a generalized ensemble model, gEnM, which tries to find the optimal linear combination of multiple constituent rankers by directly optimizing the problem defined based on the mean average precision. In order to solve this optimization problem, the algorithms are devised in two aspects, i.e., supervised and unsupervised. In addition, two settings for the data are considered in the supervised learning, namely batch and online setting. Table \ref{tab.sum} summarises the algorithms with potential applications in practice. In brief, the gEnM.BAT can be used in those IR systems that have the knowledge of labeled data, such as academic search engines; the gEnM.ON is appropriate for real-time systems where the data is given in sequence, such as movie recommendation systems; and the UnsEnM is proposed for those systems without the knowledge of labeled data, such as search engines.

\begin{table*}
\centering
\caption{Summary of the algorithms: gEnM.BAT, gEnM.ON and UnsEnM.}
\label{tab.sum}
\begin{tabular}{lccl}
  \hline
  Algorithm & Category & Setting & Application \\
  \hline
  gEnM.BAT & supervised & batch & academic search, etc. \\
  gEnM.ON  & supervised  & online & movie recommendation, etc. \\
  UnsEnM  &  unsupervised & batch & search engine, etc. \\
  \hline
\end{tabular}
\end{table*}

An experimental study was conducted based on the public data sets. The encouraging results verify the effectiveness of the proposed algorithms for both homogeneous and heterogeneous data. The gEnM performance is always better than the EnM, except for the case of UnsEnM on CACM. Briefly, the difference between gEnM.BAT and EnM is statistically significant in most cases; the gEnM.ON performs the best among the proposed algorithms for the MED, CRAN and CACM; and the unsupervised UnsEnM is more applicable for heterogeneous data than the supervised algorithms.

While we have shown the effectiveness of the proposed algorithms, we have not yet analyzed the computational complexity of the algorithms. Though we simplified the computation of the derivative and Hessian matrix, we were unable to reduced the complexity of the batch algorithm based on Newton's method. A possible future direction is to exploit cheaper and faster algorithms for the batch setting. Another interesting research topic is the selection of initial weights, which is actually an open research issue in nonlinear programming.

Apart from the potential improvements with regard to algorithms, the selection of constituent rankers is an extremely important issue. This problem may be resolved if we can identify which ranker is redundant for the ensemble. In this paper, we use human heuristics for choosing the four rankers. However, a concrete framework to effectively evaluate the contribution of each ranker is no doubt a subject worthy of further study.



\bibliographystyle{IEEEtran}
\bibliography{IEEEabrv,enm}

\appendices
\section{Derivation of the derivative of $\Lambda'$}\label{appendix.a}
(1) Derivation of the first derivative

According to the calculus chain rule, the derivative of objective in Problem P4 with respect to $\alpha_k, k=1,2,..,K_{\phi}$ is
\begin{equation}\label{equ.g.fird}
  \frac{\partial\Lambda'}{\partial\alpha'_k}=\frac{1}{L}\sum^L_{i=1}\frac{1}
  {|D_i|}\sum^{|D_i|}_{j=1}\frac{-j\sum_{d\neq d_j}\frac{\partial g'_{ij}}{\partial\alpha'_k}}{(1+\sum_{d\neq d_j}g'_{ij})^2},
\end{equation}
where
\begin{equation}\label{equ.g}
  \frac{\partial g'_{ij}}{\partial\alpha'_k}=-\beta s_{d_j,d}(\phi_k(q_i))g'_{ij}(1-g'_{ij}).
\end{equation}
(2) Derivation of the second derivative

Also by the chain rule, the second derivative with respect to $\alpha'_l, l=1,2,..,K_{\phi}$ is
\begin{equation}\label{equ.g.secd}
 \begin{aligned}
  & \frac{\partial^2\Lambda'}{\partial\alpha'_k\partial\alpha'_l}=
  \frac{1}{L}\sum^L_{i=1}\frac{1}{|D_i|} \\
  & \sum^{|D_i|}_{j=1}\frac{-j\sum\frac{\partial^2g'_{ij}}{\partial\alpha'_k\partial\alpha'_l}
  (1+\sum g'_{ij})^2+2j\sum\frac{\partial g'_{ij}}{\partial \alpha'_k}\sum\frac{\partial g'_{ij}}{\partial \alpha'_l}
  (1+\sum g'_{ij})}
  {(1+\sum_{d\neq d_j}g'_{ij})^4},
 \end{aligned}
\end{equation}
where
\begin{equation}
  \frac{\partial^2 g'_{ij}}{\partial\alpha'_k\partial\alpha'_l}=
  -\beta s_{d_j,d}(\phi_k(q_i))(1-2g'_{ij})\frac{\partial g'_{ij}}{\partial \alpha_l},
\end{equation}
and $\frac{\partial g'_{ij}}{\partial \alpha_l}$ can be calculated by Equation \ref{equ.g}.

\section{Approximation of the derivative of sigmoid function} \label{appendix.b}

For notational simplicity, we begin by considering the following sigmoid function:
\begin{equation}\label{equ.apd.sig}
f(x)=\frac{1}{1+\exp(\beta x)}.
\end{equation}

\begin{theorem}\label{thm.sig.app}
The derivative of function (\ref{equ.apd.sig}) can be approximated as follows:
\begin{equation}\label{equ.der.app}
\frac{\partial f(x)}{\partial x}\simeq
\begin{cases}
\begin{aligned}
  &-\beta(f(x)-f^2(x)), &&\text{if } -\frac{2}{\beta}<x<\frac{2}{\beta};\\
  &0, && \text{if } x<-\frac{2}{\beta} \text{ or } x>\frac{2}{\beta}. \\
\end{aligned}
\end{cases}
\end{equation}
if the scaling constant $\beta$ is large.
\end{theorem}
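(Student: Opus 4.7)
The plan is to observe that the first branch of the claim is actually an exact identity, and that only the second branch requires any approximation argument, for which the key input is the exponential decay of the sigmoid's slope away from the origin.

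First I would compute the derivative directly: $f'(x) = -\beta e^{\beta x}/(1+e^{\beta x})^2$. Then I would notice the algebraic identity
\begin{equation*}
f(x)\bigl(1-f(x)\bigr) \;=\; \frac{1}{1+e^{\beta x}}\cdot\frac{e^{\beta x}}{1+e^{\beta x}} \;=\; \frac{e^{\beta x}}{(1+e^{\beta x})^2},
\end{equation*}
so that $f'(x) = -\beta f(x)(1-f(x)) = -\beta\bigl(f(x)-f^2(x)\bigr)$ holds \emph{exactly} for every $x$. This disposes of the first branch without any approximation at all; the claim there is really an equality.

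For the second branch I would bound $|f'(x)|$ when $|x|>2/\beta$. Setting $u=\beta x$, I would use the symmetric identity $e^u/(1+e^u)^2 = 1/(e^{-u/2}+e^{u/2})^2 \leq e^{-|u|}$, which yields
\begin{equation*}
|f'(x)| \;=\; \beta\cdot\frac{e^{\beta x}}{(1+e^{\beta x})^2} \;\leq\; \beta\, e^{-\beta|x|}.
\end{equation*}
For $|x|>2/\beta$, any fixed lower bound on $|x|$ therefore forces $|f'(x)|\leq \beta e^{-\beta|x|}$, which tends to $0$ as $\beta\to\infty$ because exponential decay dominates the linear prefactor $\beta$. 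Hence on the outer region the derivative is negligible and may be replaced by $0$, justifying the symbol ``$\simeq$'' in the statement.

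The main obstacle is conceptual rather than technical: the threshold $2/\beta$ is somewhat arbitrary, since at the boundary itself one only obtains $|f'(x)|\leq \beta e^{-2}$, which is not uniformly small in $\beta$. The honest interpretation is therefore asymptotic — for any $x$ with $|x|>2/\beta$ that is \emph{not} exactly at the threshold, $\beta e^{-\beta|x|}\to 0$ rapidly as $\beta$ grows, and in the algorithmic context of Equations~\eqref{equ.Gij}–\eqref{equ.Gijkl} this justifies dropping such addends from the gradient and Hessian sums. I would therefore emphasize in the writeup that the exact identity on $(-2/\beta,2/\beta)$ does the real work, while the outer approximation is a pruning step whose error is controlled by $\beta e^{-\beta|x|}$.
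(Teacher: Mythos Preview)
Your proposal is correct and is actually more careful than the paper's own argument, but it proceeds along a different line. The paper does not bound $|f'(x)|$ directly. Instead it first replaces the sigmoid $f$ by a piecewise surrogate obtained from the ``centered linear approximation'': keep $f(x)$ on $(-2/\beta,2/\beta)$ and clamp it to a constant ($0$ or $1$) outside. From this surrogate one reads off $f(x)(1-f(x))=0$ on the outer pieces, and since $f'(x)=-\beta f(x)(1-f(x))$ this yields the claimed approximation. The threshold $2/\beta$ thus enters in the paper as the abscissa where the tangent line at the inflection point meets the two asymptotes, not via any explicit tail bound.

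What your route buys is an honest, quantitative error estimate $|f'(x)|\le \beta e^{-\beta|x|}$, together with the clean observation that the first branch is an exact identity rather than an approximation. What the paper's route buys is a picture-level explanation of where the cutoff $2/\beta$ comes from. Your caveat that at $|x|=2/\beta$ one only gets $|f'(x)|\le \beta e^{-2}$, which is not small in $\beta$, is well taken; the paper's argument has exactly the same issue (the clamped surrogate differs from $f$ by a fixed amount $1/(1+e^{\pm 2})$ at the threshold, independently of $\beta$), so neither approach yields a bound that is uniform right up to the boundary. Both arguments are therefore really asymptotic in the sense you describe, and your explicit decay rate makes that limitation transparent rather than hiding it behind a figure.
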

\begin{proof}
We apply the centered linear approximation method to the approximation of the sigmoid function as shown in Figure \ref{fig.sig.app}, which is described below:
\begin{equation}
f(x)\simeq
\begin{cases}
\begin{aligned}
& f(x), && \text{if } -\frac{2}{\beta}<x<\frac{2}{\beta};\\
& 0, && \text{if } x<-\frac{2}{\beta}; \\
& 1, && \text{if } x>\frac{2}{\beta}.
\end{aligned}
\end{cases}
\end{equation}
Hence $f(x)(1-f(x))=0$ if $x<-\frac{2}{\beta}$ or $x>\frac{2}{\beta}$. This completes the proof.
\end{proof}

We note that this approximation is more precise with a larger $\beta$.

\begin{figure}
  \centering
  \includegraphics[width=0.5\textwidth]{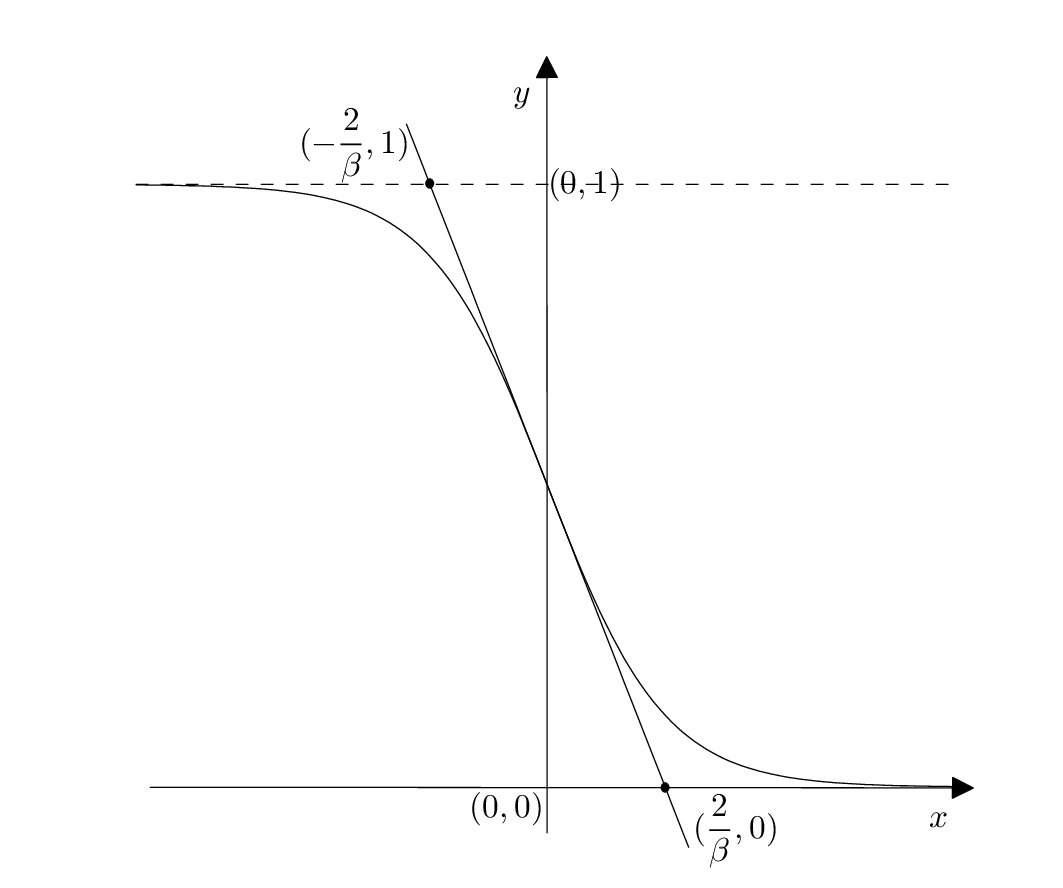}
  \caption{The approximation of sigmoid function through the centered linear approximation method. ($\beta=300$)}\label{fig.sig.app}
\end{figure}

\begin{remark}
The derivative function (\ref{equ.g}) can be approximated by:
\begin{equation}
\frac{\partial g'_{ij}}{\partial\alpha'_k}\simeq
\begin{cases}
\begin{aligned}
& -\beta s_{d_j,d}(\phi_k(q_i))g'_{ij}(1-g'_{ij}), \\
& \text{if } -\frac{2}{\beta}<\sum_k \alpha'_k s_{d_j,d}(\phi_k(q_i))<\frac{2}{\beta};\\
& 0, && \text{otherwise}.
\end{aligned}
\end{cases}
\end{equation}
if the scaling constant $\beta$ is large.
\end{remark}

\section{Proof of Lemma \ref{lemma.cond2}}

In this section, we only sketch the proof of Lemma \ref{lemma.cond2}.
\begin{proof}[Sketch of Proof]
In this proof, we use simple symbols for clarity. For example, $g(\alpha_t)$ denotes $g'_{ij}(\alpha'_t)$.
\begin{equation*}
\begin{aligned}
&\nabla f(\mathbf{x},\alpha_{t+1})^2-\nabla f(\mathbf{x},\alpha_t)^2 \\
&=\left(\frac{1}{D}\sum_{i=1}^{D} \frac{j\beta\sum s g(\alpha_{t+1})(1-g(\alpha_{t+1}))}{(1+\sum_{d\neq d_j}g(\alpha_{t+1}))^2}\right)^2 - \\
& \left(\frac{1}{D}\sum_{i=1}^{D} \frac{j\beta\sum s g(\alpha_t)(1-g(\alpha_t))}{(1+\sum_{d\neq d_j}g(\alpha_t))^2}\right)^2 \\
&< \left(\frac{1}{D}\sum_{i=1}^{D} j\beta\sum s g(\alpha_{t+1})(1-g(\alpha_{t+1}))\right)^2
\end{aligned}
\end{equation*}
For $g(\alpha_{t+1})-g(\alpha_{t+1})^2$, we have
\begin{equation*}
\begin{aligned}
g(\alpha_{t+1})-g(\alpha_{t+1})^2
&< \frac{1}{2+\exp(\beta \sum (\alpha_t+\eta\nabla f) s)} \\
&< \frac{1}{2+\exp(\beta \sum \eta\nabla f s)}.
\end{aligned}
\end{equation*}
Thus, we have
\begin{equation*}
\nabla f(\mathbf{x},\alpha_{t+1})^2-\nabla f(\mathbf{x},\alpha_t)^2
< \left(\frac{1}{D}\sum_{i=1}^{D} j\beta\sum s \frac{1}{2+\exp(\beta \sum \eta\nabla f s)}\right)^2
\end{equation*}
It is easy to show that the $\frac{1}{1+\exp(\eta)}$ is the summand of a convergent infinite sum. This result implies that $\nabla f(\mathbf{x},\alpha_t)^2$ converges because it is bounded and its oscillations are damped.
\end{proof}

\end{document}